\documentclass{IEEEtran} %[onecolumn,draft,conference,twoside,12pt,onecolumn,draftclsnofoot(or "draft")]
%\documentclass[journal]{IEEEtran} %[conference,twoside,12pt,onecolumn,draftclsnofoot(or "draft")]
%%-------------------------------------Common macro package----------------------------------------------------
\usepackage{tikz}
\usepackage{graphicx} 
\usepackage{amsmath,amssymb,amsthm,bm,mathtools}  %数学公式字体   %\usepackage{mathtools} provide {dcases}
\usepackage{soul} % underline multiple lines
\usepackage{tabularx}
\usepackage{arydshln} %%dashed lines in matrix
\usepackage{enumerate}
\usepackage{balance}   %align at bottom for two-column documents (also need a command \balance)
\usepackage[hidelinks]{hyperref}  %%generate catalog, automatically add parenthesis for equation citation  %option 'hidelinks' hides the citation links 
\usepackage{cleveref} % multiple refference
\usepackage{multirow}
\usepackage{subcaption}
\usepackage{cite}
\usepackage{url}
\allowdisplaybreaks[3]  %公式分页显示，数字0-4表示分开程度
\usepackage[normalem]{ulem} %%underline or delete
\usepackage{hhline}
\usepackage{diagbox}  %make table heads with diagonal lines
\DeclareGraphicsExtensions{.pdf,.png,.jpg,.eps}
%\usepackage{balance}

%%--------------------------------- begin   document -------------------------------------------
\begin{document}
\title{On the Information Leakage in Private Information Retrieval Systems}
\author{Tao Guo, Ruida Zhou,~\IEEEmembership{Student Member,~IEEE}, and Chao Tian,~\IEEEmembership{Senior~Member,~IEEE}
%\author{\IEEEauthorblockN{Tao Guo, Ruida Zhou, and Chao Tian}
%	\thanks{The authors are with the Department of Electrical and Computer Engineering, Texas A\&M University, College Station, TX, USA. (\mbox{e-mail:} \href{mailto:guotao@tamu.edu}{guotao@tamu.edu}, \href{mailto:ruida@tamu.edu}{ruida@tamu.edu}, \href{mailto:chao.tian@tamu.edu}{chao.tian@tamu.edu})}
%	\thanks{The work of Ruida and C. Tian was supported in part by the National Science Foundation under Grant CCF-18-32309 and CCF-18-16546.}
}
\maketitle
%\balance
%%---------------------------------------newcommands-----------------------------------------------------------
\newcommand{\reffig}[1]{Figure \ref{#1}}
\newcommand{\cA}{\mathcal{A}}
\newcommand{\cK}{\mathcal{K}}
\newcommand{\cQ}{\mathcal{Q}}
\newcommand{\cS}{\mathcal{S}}
\newcommand{\cX}{\mathcal{X}}
\newcommand{\cY}{\mathcal{Y}}
\newcommand{\bE}{\mathbb{E}}
\newcommand{\bF}{\mathbf{F}}

\theoremstyle{plain}% default
\newtheorem{theorem}{Theorem}
\newtheorem{lemma}{Lemma}
\newtheorem{corollary}{Corollary}
\newtheorem{proposition}{Proposition}
\newtheorem{conjecture}{Conjecture}
\newtheorem{claim}{Claim}
\newtheorem{example}{Example}

\theoremstyle{remark}
\newtheorem{remark}{Remark}

\theoremstyle{definition}
\newtheorem{definition}{Definition}

\captionsetup[subfigure]{labelformat=simple, labelsep=space}
\renewcommand\thesubfigure{(\alph{subfigure})}

%%----------------------------------------- abstract -----------------------------------------------------------
% As a general rule, do not put math, special symbols or citations in the abstract
%%==========================================================================
\begin{abstract}
	We consider information leakage to the user in private information retrieval (PIR) systems. 
	Information leakage can be measured in terms of individual message leakage or total leakage. Individual message leakage, or simply individual leakage, is defined as the amount of 
	information that the user can obtain on any individual message that is not being requested,  %other than the required one. 
	and the total leakage is defined as the amount of information that the user can obtain about all the other messages except the one being requested. 	
	In this work, we characterize the tradeoff between the minimum download cost and the individual leakage, and that for the total leakage, respectively. 
	Coding schemes are proposed to achieve these optimal tradeoffs, which are also shown to be optimal in terms of the message size. 
	We further characterize the optimal tradeoff between the minimum amount of common randomness and the total leakage. 
	Moreover, we show that under individual leakage, common randomness is in fact unnecessary when there are more than two messages. 
\end{abstract}
%% Note that keywords are not normally used for peerreview papers. always use IEEEkeywords
\begin{IEEEkeywords}
	Privacy, information leakage, encryption
\end{IEEEkeywords}

%%=====================================================================
\section{Introduction}\label{section-intro}
The problem of {\it private information retrieval} (PIR) \cite{PIR-1,PIR-2} addresses the retrieval of one out of $K$ messages from $N$ replicated databases, 
without revealing the identity of the desired message to any individual database. 
A trivial solution is to simply retrieve all the messages, however, the download cost in this solution is unrealistic. Therefore, 
in the PIR problem, the goal is to find an efficient protocol, i.e., with the minimum download cost, to privately retrieve the desired message. 
The capacity of a PIR system is defined as the maximum number of bits of desired message that can be retrieved per bit of downloaded information, 
which was shown in \cite{Sun-Jafar-PIR-capacity-2017IT} to be $C_{\text{PIR}}=\left(1+1/N+1/N^2+\cdots+1/N^{K-1}\right)^{-1}$. %The optimal code  constructed in \cite{Sun-Jafar-PIR-capacity-2017IT} will be referred to as the SJ code in the rest of the paper. 
%A class of capacity-achieving codes proposed in \cite{Tian-Sun-Chen-PIR-IT19}, referre to as the TSC code, has the additional advantage of being also optimal in both the message size and the upload cost (query transmission cost). 
There were also significant interests recently on various other variations of the PIR problem; see, e.g., 
\cite{Tian-Sun-Chen-PIR-IT19,Sun-Jafar-SPIR-19IT,Sun-Jafar-robustPIR-2017IT,Sun-Jafar-PIR-2017TIFS,Sun-Jafar-PIR-multiround-2018IT,Sun-Jafar-conjectured-2018IT,Banawan-Ulukus-codedPIR-2018IT,Banawan-Ulukus-PIR-MultiMessage-18IT,Banawan-Ulukus-PIR-byzatine-colluding-IT19,Wangqiwen-SPIR-MDS2016ICC,Wangqiwen-SPIR-MDS-2019IT,Wangqiwen-PIR-eva-2019IT,Wangqiwen-PIR-SPIR-Eva-2019IT,secure-and-PIR-2018TVT,coded-PIR--TChan-2015ISIT,distributedPIR,PIR-MDS-dis-storage-2018IT,PIR-array-code-2019IT,PIR-asymmetry-IT19,PIR-cache-Allerton17,PIR-coded-colluding-17,PIR-dis-storage-18TIFS,PIR-dis-storage-linear-ISIT17,PIR-dis-storage-MDS-19IT,Zhouruida-PIR-arxiv19}.

The problem considered in this work is closely related to the {\it symmetric} private information retrieval (SPIR)  problem~\cite{Sun-Jafar-SPIR-19IT}, 
where ``symmetric" refers to the fact that both user privacy and database privacy need to be preserved. 
Database privacy requires that the user obtains no information on other messages beyond the requested message; strictly speaking, this is a security requirement rather than a privacy requirement, and we shall refer to it as such in the sequel.  
It is well understood that SPIR can also be viewed as a multi-database oblivious transfer problem~\cite{oblivious-transfer1,oblivious-transfer2,ishai2008founding,naor2000distributed}. 
It was shown in \cite{Sun-Jafar-SPIR-19IT} that to ensure perfect security, 
the databases need to share {\it common randomness} (a common key) which is independent of the messages and only available to the databases. 
The common key is used for randomizing the answers such that no information about the non-desired messages is leaked. 
The capacity of SPIR was shown to be $C_{\text{SPIR}}=1-1/N=\left(1+1/N+1/N^2+\cdots+1/N^{\infty}\right)^{-1}$, 
as long as the amount of common randomness is at least $\frac{1}{N-1}$ bits per desired message bit. 

In practice, the requirement of perfect security may be too stringent. For example, in the SPIR setting, it is likely to be acceptable for a user to obtain a single byte of information in a 1-hour video data (assuming the message is a video) that he does not request, i.e., leaking a very small amount of information \cite{yamamoto86,raymondbook,guo-guang-shum-2018ITW}. 
This is the type of systems that we wish to understand in this work. Allowing a small amount of information leakage let us control the system security level in a finer grain manner. In this context, SPIR essentially requires strictly zero information leakage, while the classical PIR does not have a security constraint at all. Thus our goal is to understand the tradeoff between the download cost and the amount of information leakage in the regime other than these two extreme cases. 
From the perspective of the oblivious transfer, we are essentially considering a generalized multi-database oblivious transfer setting \cite{naor2000distributed}, where one side of the privacy requirements (that corresponding to the database privacy) is allowed to be less than perfect. %Note that the information leakage is from the information security point of view in the distributed storage system and cannot be viewed as generalizations of the oblivious transfer model, which requires perfect security in the single-database case.}

Two different notions of information leakage can be defined: the individual leakage is defined as the amount of information that the user  can obtain about any individual non-desired message, and the total leakage as the amount of information that the user can obtain about all the non-desired messages. The former is similar to the weak security constraints in the literature \cite{weakly-NC-Narayanan05,weakly-NC-Alex11,weakly-SMDC-GTLY-19,weakly-secure-channel1}, while the latter corresponds to the standard strong security constraints. 
In practice, the weak security requirement may be more suitable than the strong version in some applications. For example, when the messages are video sequences, the user should not obtain information about any individual video segment, but obtaining the binary XOR of two video sequences may not be an issue since it will not lead to a meaningfully decodable video sequence, and thus enforcing that the user cannot obtain this XOR may be too stringent. Moreover, a protocol with a weak security constraint can potentially be implemented more efficiently in practical settings and may not require common keys.

The main result of this work is the characterizations of the optimal tradeoffs between the download cost and the amount of information leakage for both individual leakage and total leakage. By adapting the codes given in \cite{Tian-Sun-Chen-PIR-IT19} and \cite{Sun-Jafar-SPIR-19IT}, we provide code constructions that can achieve these optimal tradeoffs; moreover, they are also shown to have the minimum message sizes. For the individual leakage case, the constructed codes do not require common randomness (unless there are only two messages). At the extreme case with perfect individual message security, the download cost is in fact the same as that with perfect total security. This is rather reassuring since it implies that the stronger security requirement does not induce any additional download cost.

A related problem, namely weakly private information retrieval, was studied in \cite{weakly-PIR-isit2019,leaky-PIR-isit2019,WeakPIR-jiazhuqing-thesis}. For weakly private information retrieval systems, a small amount of privacy leakage is allowed, i.e., the leakage is to the server and on the requested message index, and it differs from the setting we are considering where the information leakage is to the user and on the message content. 

The rest of the paper is organized as follows. We formally define the problem and review the performance of classical PIR and SPIR codes in \Cref{section-preliminary}.
\Cref{section-main} is devoted to our main results on the optimal tradeoffs and the minimum message size.
%In \Cref{section-WS-PIR-code}, we propose an optimal code for WS-PIR that do not need any common randomness for $K\geq 3$. 
The proofs are given in \Cref{section-strong-proof}, \Cref{section-weak-proof}, and the appendices.
%In \Cref{section-discuss}, we discuss the PIR system with worst-case information leakage. 
We conclude the paper in \Cref{section-conclusion}.

%%====================================================================
\section{Preliminaries}\label{section-preliminary}
\subsection{Problem Statement}\label{section-setup}
For positive integers $K,N$, let $[1:K]\triangleq \{1,2,\cdots,K\}$ and $[1:N]\triangleq\{1,2,\cdots,N\}$. 
For any $k\in[1:K]$, define the complement set by $\bar{k}\triangleq \{1,2,\cdots,K\}\backslash\{k\}$. 

In a {\it private information retrieval} (PIR) system, there are $K$ independent messages $W_{1:K}=(W_1,W_2,\cdots,W_K)$, 
each of which is comprised of $L$ i.i.d. symbols uniformly distributed over a finite alphabet $\cX$. 
In the sequel, we will use $\log_{|\cX|}$-ary units, instead of bits, as the information measure, unless specified otherwise. We have 
\begin{align}
&H(W_{1:K})=H(W_1)+H(W_2)+\cdots+H(W_K), \\
&H(W_1)=H(W_2)=\cdots=H(W_K)=L.
\end{align} 
There are a total of $N$ databases, each of which stores all the messages $W_{1:K}$. 
A user aims to retrieve a message $W_k,~k\in[1:K]$ from the $N$ databases 
without revealing the identity $k$ of the desired message to any individual database. 
An independent random key $\bF$ is used to generate queries $Q_{1:N}^{[k]}=\left(Q_1^{[k]},Q_2^{[k]},\cdots,Q_N^{[k]}\right)$, i.e.,
\begin{equation}
H(Q_{1:N}^{[k]}|\bF)=0,~\forall k\in[1:K],  \label{query-generation}
\end{equation} 
where $Q_n^{[k]}\in\cQ_n$ for $n\in[1:N]$. 
For $n\in[1:N]$, the $n$-th query $Q_n^{[k]}$ is sent to the $n$-th database. 
Since we wish to protect the non-desired messages from information leakage, the databases may need to share a common random key $S\in\cS$ that is not accessible to the user, which induces the condition
\begin{equation}
H(W_{1:K},\bF,S)=H(W_{1:K})+H(\bF)+H(S).
\end{equation}
Upon receiving $Q_n^{[k]}$, the $n$-th database generates an answer $A_n^{[k]}$ from the query $Q_n^{[k]}$, the stored messages $W_{1:K}$, and the common key $S$, i.e., 
\begin{equation}
A_n^{[k]}=\varphi_n(Q_n^{[k]},W_{1:K},S),~\forall k\in[1:K],~\forall n\in[1:N], 
\end{equation} 
which implies 
\begin{align}
H(A_n^{[k]}|Q_n^{[k]},W_{1:K},S)=0,~\forall k\in[1:K],\forall n\in[1:N].   \label{ans-generation}
\end{align} 
The answer symbols are from a finite alphabet $\cY$, i.e., $A_n^{[k]}\in\cY^{\ell_n}$, where $\ell_n$ is the length of the answer. 
Using all the answers $A_{1:N}^{[k]}=\left(A_1^{[k]},A_2^{[k]},\cdots,A_N^{[k]}\right)$ from the $N$ databases and the values of $\bF$ and $k$, the user perfectly decodes the desired message $W_k$, which further implies that 
\begin{equation}
H(W_k|A_{1:N}^{[k]},\bF)=0.  \label{recovery-constraint}
\end{equation}

To satisfy the privacy requirement of keeping the desired message index private to any one of the databases, 
the received queries should be identically distributed, i.e., 
\begin{equation}
Q_n^{[k]}\sim Q_n^{[k']},~\forall k,k'\in[1:K],~\forall n\in[1:N].  
\end{equation}
Since $W_{1:K}$, $S$, and $\bF$ are independent, in light of \eqref{query-generation}, we have 
\begin{align}
&(Q_n^{[k]},W_{1:K},S)\sim(Q_n^{[k']},W_{1:K},S),   \nonumber \\
&\qquad \qquad \qquad \forall k,k'\in[1:K],~\forall n\in[1:N]. 
\end{align}
Since $A_n^{[k]}$ is a deterministic function of $(Q_n^{[k]},W_{1:K},S)$, the following identical distribution constraint must also hold 
\begin{align}
&(Q_n^{[k]},A_n^{[k]},W_{1:K},S)\sim(Q_n^{[k']},A_n^{[k']},W_{1:K},S),  \nonumber \\
&\qquad \qquad \qquad \forall k,k'\in[1:K],~\forall n\in[1:N].  \label{privacy-identical-distribute}
\end{align}

In contrast to the perfect security of the non-desired messages $W_{\bar{k}}=(W_1,\cdots,W_{k-1},W_{k+1},\cdots,W_K)$ in \cite{Sun-Jafar-SPIR-19IT}, 
we allow information leakage in this work. Define the total leakage as $I(W_{\bar{k}};Q_{1:N}^{[k]},A_{1:N}^{[k]},\bF)$, 
which is the amount of information that the user can obtain about all the non-desired messages. 
Define the individual message leakage for message $k'\in[1:K]~(k'\neq k)$ as $I(W_{k'};Q_{1:N}^{[k]},A_{1:N}^{[k]},\bF)$, 
which is the amount of information that the user can obtain about  the individual non-desired message-$k'$. 
The total leakage and the individual leakage in the systems are constrained as follows
\begin{align}
\frac{1}{L}I(W_{\bar{k}};Q_{1:N}^{[k]},A_{1:N}^{[k]},\bF)&\leq s,~ \forall k\in[1:K]    \label{def-strong}  \\
\frac{1}{L}I(W_{k'};Q_{1:N}^{[k]},A_{1:N}^{[k]},\bF)&\leq w,~ \forall k'\neq k\in[1:K],   \label{def-weak}
\end{align}
where the parameters $s$ and $w$ are used to indicate the strong security requirement and the weak security requirement, respectively. 
For $K=2$, since $I(W_{\bar{k}};Q_{1:N}^{[k]},A_{1:N}^{[k]},\bF)=I(W_{k'};Q_{1:N}^{[k]},A_{1:N}^{[k]},\bF)$, the total leakage constraint is equivalent to the individual leakage constraint. 

%\begin{remark}
%	$\frac{1}{(K-1)L}I(W_{\bar{k}};A_{1:N}^{[k]},Q_{1:N}^{[k]})$ is the normalized information leakage per message. 
%	It may also be of interest in some other circumstances. 
%\end{remark}

In a PIR system, the message size is defined as the number of bits to represent each individual message, which is clearly $L\log_2|\cX|$ in our notation. 
The download cost is defined as 
\begin{equation}
D\triangleq \log_{|\cX|}|\cY|\sum_{n=1}^N\bE(\ell_n),  \label{def-D}
\end{equation}
where the expectation is taken over the set $\cQ_n$ of all possible queries. 
Note that $D$ is a deterministic function of queries and query distribution, but neither the particular realization of messages nor the choice of desired message index $k$. 
%Our main results here is to characterize the tradeoff between the minimum download cost $D_{\min}$ and the security levels $s$ and $w$. 
%The retrieval rate of the system characterizes the amount of retrieved information bit per downloaded bit, and is defined by 
%\begin{equation}
%R\triangleq \frac{L}{D}=\frac{L}{\log_{|\cX|}|\cY|\sum_{n=1}^N\bE(\ell_n)}. 
%\end{equation}
The amount of common randomness is normalized by the message length $L$ as
\begin{equation}
\rho\triangleq \frac{H(S)}{L}.
\end{equation}

%%=====================================================================
\subsection{Analysis of Several Existing PIR Codes}\label{section-existing}
Before presenting our main result, we provide the result of a simple analysis on the capacity-achieving PIR code given by Sun and Jafar \cite{Sun-Jafar-PIR-capacity-2017IT}, the capacity-achieving code proposed by Tian et al. \cite{Tian-Sun-Chen-PIR-IT19}, and the SPIR code \cite{Sun-Jafar-SPIR-19IT}; these three classes of codes will be referred to as the SJ code, the TSC code, and the SPIR code, respectively.
%In the following, we first consider total leakage normalized by message length $L$. 
\begin{itemize}
	\item The SJ code \cite{Sun-Jafar-PIR-capacity-2017IT}: 
	\begin{align}
	&L=N^K;  ~ D=\frac{N(N^K-1)}{N-1};   \label{SJ-performance}  \\
	&\frac{1}{L}I(W_{\bar{k}};A_{1:N}^{[k]},\bF)=\frac{1}{N-1} \left(1-\frac{1}{N^{K-1}}\right);  \label{SJ-total-leakage}  \\
	&\frac{1}{L}I(W_{k'};A_{1:N}^{[k]},\bF)=\frac{1}{N^{K-1}},~ \forall k'\neq k\in[1:K]. \label{SJ-individual-leakage}
	\end{align}
	
	\item The TSC code \cite{Tian-Sun-Chen-PIR-IT19}: 
	\begin{align}
	&L=N-1;  ~ D=\frac{N^K-1}{N^{K-1}};   \label{TSC-performance}  \\
	&\frac{1}{L}I(W_{\bar{k}};A_{1:N}^{[k]},\bF)=\frac{1}{N-1}\left(1-\frac{1}{N^{K-1}}\right);  \label{TSC-total-leakage}  \\
	&\frac{1}{L}I(W_{k'};A_{1:N}^{[k]},\bF)=\frac{1}{N^{K-1}},~ \forall k'\neq k\in[1:K].  \label{TSC-individual-leakage}
	\end{align}
	
	\item The SPIR code \cite{Sun-Jafar-SPIR-19IT}: 
	\begin{align}
	L=N-1; ~ D=N; ~ \rho=\frac{1}{N-1}.   \label{SPIR-performance}
	\end{align}
\end{itemize}

From \eqref{SJ-total-leakage}, \eqref{TSC-total-leakage} and \eqref{SJ-individual-leakage}, \eqref{TSC-individual-leakage}, it is seen that the SJ code and TSC code have the same performance in terms of both the total leakage and the individual leakage. 
As an example, consider the case $N=3,~K=3$, then $\frac{1}{L}I(W_{\bar{k}};A_{1:N}^{[k]},Q_{1:N}^{[k]},\bF)=\frac{4}{9}$ and $\frac{1}{L}I(W_{k'};A_{1:N}^{[k]},Q_{1:N}^{[k]},\bF)=\frac{1}{9}$ for both the SJ code and the TSC code. 

%The individual leakage normalized by message length is given below, where we see that the two codes perform the same as each other. 
%\begin{itemize}
%	\item SJ code \cite{Sun-Jafar-PIR-capacity-2017IT}: 
%	\begin{equation}
%	\frac{1}{L}I(W_{k'};Q_{1:N}^{[k]},A_{1:N}^{[k]})=\frac{N}{N^K}=\frac{1}{N^{K-1}},~ \forall k'\neq k\in[1:K].
%	\end{equation}
%	\item TSC code \cite{Tian-Sun-Chen-PIR-IT19}: 
%	\begin{equation}
%	\frac{1}{L}I(W_{k'};Q_{1:N}^{[k]},A_{1:N}^{[k]})=\frac{1}{N-1}\left[\left(\frac{1}{N}\right)^{K-2} \left(\frac{N-1}{N}\right)\right]=\frac{1}{N^{K-1}},~ \forall k'\neq k\in[1:K]. 
%	\end{equation}
%\end{itemize}
%
%The download costs of the two capacity-achieving codes are given in as follows.

%%======================================================================
\section{Main Results}\label{section-main}
%For $K=1$, there is only one message, and the privacy constraint is impossible since the databases always know which message is desired. 
%For $N=1$, there is only one database, we are not able to retrieve a message by keeping privacy of its index and security of other messages simultaneously. 
%This can be seen as follows: i) the privacy constraint in \eqref{privacy-identical-distribute} implies that the answer from the only database $A_1^{[k]}$ is identically distributed for all $k\in[1:K]$; ii) the recovery constraint in \eqref{recovery-constraint} implies that $W_k$ can be decoded from $A_1^{[k]}$ and $\bF$, and thus we can decode all messages $W_{1:K}$ from $A_1^{[k]}$ and $\bF$, which means that the non-desired messages are not secure any more. 
%
%In the sequel, we consider only $K,N\geq 2$. 
%%The tradeoff between the minimum download cost $D_{\min}$ and total leakage level $s$ and individual leakage level $w$ are characterized, respectively.

%%===============================================================

%We present the result on the total leakage and the individual leakage next. 

\subsection{Total Leakage}\label{section-total}
%When measuring the security of the PIR system on total leakage, 
%an upper bound on the normalized leakage called total leakage level $s$ is considered. 

The following theorem characterizes the optimal tradeoff between the minimum download cost $D_{\min}$ and total leakage constraint $s$. 
For notational convenience, define 
\begin{equation}
D_{\min}^0\triangleq L\cdot\left(1+\frac{1}{N}+\cdots+\frac{1}{N^{K-1}}\right). \label{def-D-min-0}
\end{equation}
%%%--------------------- BEGIN thm-tradeoff-strong ------------------------------
\begin{theorem}\label{thm-tradeoff-strong}
	If the amount of common randomness satisfies $\rho\geq \rho_{\min}^{\text{s}}$, where 
	\begin{equation}
	\rho_{\min}^{\text{s}}\triangleq \frac{1}{N-1}-\frac{N^{K-1}}{N^{K-1}-1}\cdot s,  \label{min-key-size-strong}
	\end{equation}
	then the minimum download cost $D_{\min}$ of the PIR system is given by 
	\begin{equation}
	D_{\min}=
	\begin{cases}
	L\cdot\left(\frac{N}{N-1}-\frac{1}{N^{K-1}-1}\cdot s\right),&\text{if }0\leq s\leq s_t \\
	D_{\min}^0,&\text{otherwise,}
	\end{cases}  \label{D-min-strong}
	\end{equation}
	where the threshold is defined by $s_t\triangleq \frac{1}{N-1}\left(1-\frac{1}{N^{K-1}}\right)$  %$=\frac{1}{N}+\frac{1}{N^2}+\cdots+\frac{1}{N^{K-1}}$ 
	and $D_{\min}^0$ is defined in \eqref{def-D-min-0}. 
	If $\rho<\rho_{\min}^{\text{s}}$, then $D_{\min}=\infty$. 
\end{theorem}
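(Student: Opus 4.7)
The plan is to prove the theorem in two halves: achievability by combining existing codes, and a converse that simultaneously lower-bounds $D$ and $H(S)$. For achievability, I would take a convex combination of two known schemes. At $s=0$ the SPIR code \eqref{SPIR-performance} achieves $D/L=N/(N-1)$ with $\rho=1/(N-1)$ and zero leakage; at $s=s_t$ the TSC code \eqref{TSC-performance} achieves $D/L=D_{\min}^0/L$ with $\rho=0$ and, by \eqref{TSC-total-leakage}, total leakage exactly $s_t$. Running these two codes on independent message sub-blocks with fraction $\alpha=1-s/s_t$ on SPIR and $1-\alpha$ on TSC makes the download cost, total leakage, and common randomness all combine linearly across the two blocks, and a direct calculation shows it hits $D_{\min}$ and $\rho_{\min}^{\text{s}}$ simultaneously for every $s\in[0,s_t]$. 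For $s\geq s_t$ the TSC (or SJ) code alone attains $D_{\min}^0$ with zero common randomness.

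For the download-cost converse, I would start from the decoding identity
\begin{equation*}
D\geq H(A_{1:N}^{[k]}\mid Q_{1:N}^{[k]},\bF)=L+H(A_{1:N}^{[k]}\mid W_k,Q_{1:N}^{[k]},\bF),
\end{equation*}
which uses \eqref{recovery-constraint} and the independence of $W_k$ from $(\bF,Q_{1:N}^{[k]})$. Splitting the conditional entropy as
\begin{equation*}
H(A_{1:N}^{[k]}\mid W_k,Q_{1:N}^{[k]},\bF)=I(A_{1:N}^{[k]};W_{\bar{k}}\mid W_k,Q_{1:N}^{[k]},\bF)+H(A_{1:N}^{[k]}\mid W_{1:K},Q_{1:N}^{[k]},\bF),
\end{equation*}
the first term is controlled by the total-leakage constraint \eqref{def-strong} together with $I(W_{\bar{k}};Q_{1:N}^{[k]},\bF)=0$, and the second (which measures fresh randomness injected by $S$) is lower-bounded by a Sun--Jafar-style recursion that moves one database at a time and invokes the identical-distribution property \eqref{privacy-identical-distribute} to swap the requested index. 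Iterating the recursion $K-1$ times and aggregating the residual leakage terms at each step should yield $D\geq L\bigl(\tfrac{N}{N-1}-\tfrac{s}{N^{K-1}-1}\bigr)$ for $s\leq s_t$, while for $s\geq s_t$ the classical PIR converse of \cite{Sun-Jafar-PIR-capacity-2017IT} supplies $D\geq D_{\min}^0$.

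The common-randomness converse reuses the same recursion. Because $A_{1:N}^{[k]}$ is a deterministic function of $(Q_{1:N}^{[k]},W_{1:K},S)$, we have $H(A_{1:N}^{[k]}\mid W_{1:K},Q_{1:N}^{[k]},\bF)\leq H(S)$. Substituting this bound into the recursion above, the accumulated leakage term contributes $\tfrac{N^{K-1}}{N^{K-1}-1}Ls$ and the decoding term contributes $\tfrac{L}{N-1}$, giving $H(S)\geq L\rho_{\min}^{\text{s}}$. Hence, if $\rho<\rho_{\min}^{\text{s}}$ no scheme can simultaneously meet decoding, privacy, and leakage-$s$, which forces $D_{\min}=\infty$.

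The main obstacle will be carrying the partial-leakage term cleanly through the recursive converse. In the original SPIR argument, zero leakage kills all cross-message mutual-information terms at each of the $K-1$ recursion levels, but here a nonzero residue survives at every level, and I will have to book-keep these residues so that after the full recursion they collapse to exactly $1/(N^{K-1}-1)$ in the download bound and $N^{K-1}/(N^{K-1}-1)$ in the randomness bound. A related subtle point is that \eqref{def-strong} is phrased for a fixed requested index, whereas the recursion effectively reindexes the ``target'' at each step; verifying via \eqref{privacy-identical-distribute} that the leakage bound transfers faithfully to every reindexed subsystem is where the delicate accounting lives.
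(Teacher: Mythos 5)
Your overall plan is sound, and both halves will go through, but let me flag two points of divergence from the paper and one place where your mental model of the recursion is slightly off.

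On achievability, you propose time-sharing over independent message sub-blocks, which does yield the right linear tradeoff for $D$, $s$, and $\rho$ (with some care for irrational $\alpha$, handled by rational approximation or by noting convexity of the achievable region). The paper instead uses a \emph{probabilistic} combination: the random key $\bF$ generates an extra indicator bit $F_0$ with $\Pr(F_0=0)=\frac{N^{K-1}(N-1)s}{N^{K-1}-1}$, and the databases run the TSC code when $F_0=0$ and the SPIR code when $F_0=1$, always with $L=N-1$. Both approaches prove Theorem~\ref{thm-tradeoff-strong}; the paper explicitly prefers the probabilistic version because time-sharing would require a larger (and $s$-dependent) message size, which matters for Theorem~\ref{thm-message-size}.

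On the converse, your decomposition $D\geq L+I(W_{\bar k};A_{1:N}^{[k]},Q_{1:N}^{[k]},\bF)+H(A_{1:N}^{[k]}\mid W_{1:K},Q_{1:N}^{[k]},\bF)$ is exactly correct (the middle term is the total leakage, and the last is bounded above by $H(S)$), and plugging the Sun--Jafar iteration into $H(A_{1:N}^{[k]},W_k\mid\bF)=L+\text{leakage}+H(A_{1:N}^{[k]}\mid W_{1:K},\bF)$ and symmetrizing over $k\leftrightarrow k'$ recovers both \eqref{D-min-strong} and \eqref{min-key-size-strong} with the right constants. This is algebraically equivalent to the paper's manipulation, which works with the joint entropy $H(A_{1:N}^{[k]},W_{1:K}\mid\bF)$ directly; your version makes the role of $H(S)$ a bit more transparent. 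However, your remark about ``aggregating the residual leakage terms at each step'' overcomplicates the bookkeeping: the recursion (Lemma~\ref{lemma-strong-iteration}) is the pure Sun--Jafar iteration with no leakage terms, applied unchanged for $K-1$ steps. The leakage enters exactly once, at the very end, when the outermost $H(A_{1:N}^{[k]},W_k\mid\bF)$ is decomposed; there is no per-level residue. Similarly, the re-indexing of the target index happens in the recursion but is insulated from the leakage constraint, which you only invoke once for each of the two symmetrized indices $k$ and $k'$.
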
 
%%%--------------------- END thm-tradeoff-strong ------------------------------
\begin{proof}
	The proof is given in \Cref{section-strong-proof}. 
\end{proof}
\begin{remark}\label{remark-strong-thm}
	The case where $D_{\min}=\infty$ indicates that it is impossible to simultaneously meet all the system requirements, i.e., i) retrieval; ii) privacy; iii) total leakage constraint. 
	In this case, the capacity $C=0$.
\end{remark}

In light of \eqref{min-key-size-strong} and \eqref{D-min-strong}, we can view $D_{\min}$ as a bivariate function of $\rho$ and $s$. 
For a given $s$, the dependency of $D_{\min}$ on $\rho$ only appears at the threshold $\rho_{\min}^{\text{s}}$ 
so that i) $D_{\min}=\infty$ if $\rho<\rho_{\min}^{\text{s}}$; ii) $D_{\min}<\infty$ and $D_{\min}$ is independent of $\rho$ if $\rho\geq \rho_{\min}^{\text{s}}$. 
For $\rho\geq \rho_{\min}^{\text{s}}$, the dependency of $D_{\min}$ on $(\rho,s)$ is illustrated in Fig.~\ref{fig-tradeoff-Dmin-rho-s}. 
%%------------------------ BEGIN-figure-tradeoff-strong ---------------------------------
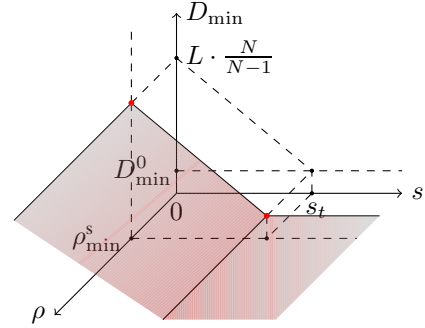
\begin{figure}[!ht]
	\centering
	\begin{tikzpicture}[scale=0.6]
	\draw[->] (0,0)--(0,4); \node [right] at (0,4) {$D_{\min}$};
	\draw[->] (0,0)--(5,0); \node [right] at (5,0) {$s$};
	\draw[->] (0,0)--(-2.7,-2.7); \node [left] at (-2.7,-2.7) {$\rho$};
	\node [below] at (0,0) {$0$};
	
	\draw (-1,2)--(-3.6,-0.6);  \draw[dashed] (-1,2)--(-1,3.6); 
	\draw (2,-0.5)--(-1,2);
	\draw (4.5,-0.5)--(2,-0.5)--(-0.3,-2.8);
	
	\draw[dashed] (-1,2)--(0,3)--(3,0.5)--(3,0); \filldraw (0,3) circle (1pt); \node [right] at (0,3) {$L\cdot\frac{N}{N-1}$}; 
	\draw[dashed] (0,0.5)--(3,0.5)--(5,0.5); \filldraw (3,0.5) circle (1pt); 
	\filldraw (0,0.5) circle (1pt); \node [left] at (0.1,0.5) {$D_{\min}^0$}; 
	\draw[dashed] (2,-0.5)--(3,0.5); \filldraw (3,0) circle (1pt); \node [below] at (3.1,0) {$s_t$}; 
	\draw[dashed] (-1,2)--(-1,-1); \filldraw (-1,-1) circle (1pt); \node [left] at (-1,-1) {$\rho_{\min}^{\text{s}}$}; 
	\draw[dashed] (-1,-1)--(2,-1)--(3,0) (2,-0.5)--(2,-1)--(4,-1); \filldraw (2,-1) circle (1pt); 
	
	\filldraw[red] (-1,2) circle (1.5pt); \filldraw[red] (2,-0.5) circle (1.5pt); 
	%	\fill[gray!20,nearly transparent] (-3.6,-0.6)--(-1,2)--(2,-0.5)--(-0.7,-3.2)--cycle; 
	%	\fill[gray!20,nearly transparent] (2,-0.5)--(-0.3,-2.8)--(1.8,-3.2)--(4.5,-0.5)--cycle; 
	\fill[top color=gray!10, bottom color=red,shading=axis,shading angle=45,opacity=0.15] (-3.6,-0.6)--(-1,2)--(2,-0.5)--(-0.3,-2.8)--cycle; 
	\fill[top color=red, bottom color=gray!10,shading=axis,shading angle=90,opacity=0.15] (2,-0.5)--(-0.3,-2.8)--(2.2,-2.8)--(4.5,-0.5)--cycle; 
	%	\node [right] at (2.5,1.3) {$D_{\min}^0\triangleq L\cdot\left(1+\frac{1}{N}+\cdots+\frac{1}{N^{K-1}}\right)$}; 
	\end{tikzpicture}
	\caption{The dependency of $D_{\min}$ on $(\rho,s)$.}
	\label{fig-tradeoff-Dmin-rho-s}
\end{figure}
%%-------------------------- END-figure-tradeoff-strong ---------------------------------
The dashed lines project the corner points to the axises that show their values. 
The value of $D_{\min}$ with respective to $(\rho,s)$ is given by the shaded area. 
We see that for a given value of $s$, $D_{\min}$ is a constant and thus independent with $\rho$. 
%For a given value of $\rho$ such that  $\rho \geq \rho_{\min}^{\text{s}}$, t
The shaded area is projected onto the $D_{\min}$-$s$ plane, which is drawn in Fig.~\ref{fig-tradeoff-strong}. 

%%------------------------ BEGIN-figure-tradeoff-strong ---------------------------------
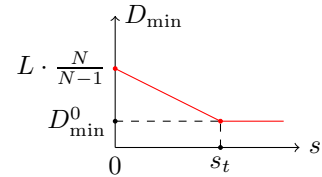
\begin{figure}[!ht]
	\centering
	\begin{tikzpicture}[scale=0.7]
	\draw[->] (0,0)--(0,2.5); \node [right] at (0,2.5) {$D_{\min}$};
	\draw[->] (0,0)--(3.5,0); \node [right] at (3.5,0) {$s$};
	\node [below] at (0,0) {$0$};
	
	\filldraw [red](0,1.5) circle (1pt); \node [left] at (0,1.5) {$L\cdot\frac{N}{N-1}$};
	\filldraw (2.0,0) circle (1pt); \node [below] at (2.0,0) {$s_t$}; \draw[dashed] (2.0,0)--(2.0,0.5);
	\filldraw (0,0.5) circle (1pt); \node [left] at (0,0.5) {$D_{\min}^0$}; \draw[dashed] (0,0.5)--(2.0,0.5);
	
	\draw[red] (0,1.5)--(2.0,0.5)--(3.2,0.5); \filldraw[red] (2.0,0.5) circle (1pt);
	
	%	\fill[green!20] (0,2.5)--(0,1.5)--(2.0,0.5)--(3.5,0.5)--cycle;
	\end{tikzpicture}
	\caption{Tradeoff curve between $D_{\min}$ and $s$.}
	\label{fig-tradeoff-strong}
\end{figure}
%%-------------------------- END-figure-tradeoff-strong ---------------------------------
The red line in Fig.~\ref{fig-tradeoff-strong} is the tradeoff curve  between $D_{\min}$ and $s$ for $\rho\geq \rho_{\min}^{\text{s}}$. Codes that achieve points on this optimal tradeoff curve will be referred to as {\it Pareto optimal} codes.
We see from Fig.~\ref{fig-tradeoff-strong} that for $s=0$, the problem reduces to SPIR \cite{Sun-Jafar-SPIR-19IT} 
for which the capacity is $C_{\text{SPIR}}=\left.\frac{L}{D_{\min}}\right|_{s=0}=1-\frac{1}{N}$. 
We further observe that i) the threshold $s_t$ is equal to the normalized total leakage of 
the capacity-achieving SJ code and TSC code in \eqref{SJ-total-leakage} and \eqref{TSC-total-leakage}; 
ii) for $s\geq s_t$, the capacity $C=\left.\frac{L}{D_{\min}}\right|_{s=s_t}=1+\frac{1}{N}+\cdots+\frac{1}{N^{K-1}}$ is equal to $C_{\text{PIR}}$. 
%Since $C$ is non-increasing in $s$ and lower bounded by $C_{\text{PIR}}$, 
%the two observations i) and ii) imply that for $s>s_t$, the total leakage level constraint is inactive, 
%and the Pareto optimal codes for $s$ can always achieve a security level of $s_t$ with either the SJ code or the TSC code. 

%It is important to point out that the curve between the two extreme points is in fact a straight line, 
%which means that any $(D_{\min},s)$ pair on this line is a linear combination of the two extreme points. 
%This observation provides a simple Pareto optimal code design, 
%which is a time sharing between the optimal classical PIR code and SPIR code. 
%The details of the code are given in \Cref{section-strong-achievability}. 

%%======================================================================
\subsection{Individual Leakage}\label{section-individual}
%We consider an upper bound on the normalized individual leakage called individual leakage level $w$. 
The following theorem characterizes the tradeoff between the minimum download cost $D_{\min}$ and the individual leakage constraint $w$. 
%%%--------------------- BEGIN thm-tradeoff-weak ------------------------------
\begin{theorem}\label{thm-tradeoff-weak}
	If the amount of common randomness satisfies $\rho\geq \rho_{\min}^{\text{w}}$, where 
	\begin{equation}
	\rho_{\min}^{\text{w}}\triangleq  
	\begin{cases}
	\frac{1}{N-1}-\frac{N}{N-1}\cdot w,& \text{ if }K=2 \\
	0, &\text{ if }K\geq 3,
	\end{cases}  \label{min-key-size-weak}
	\end{equation}
	then the minimum download cost $D_{\min}$ of the PIR system is given by 
	\begin{equation}
	D_{\min}=
	\begin{cases}
	L\cdot\left(\frac{N}{N-1}-\frac{1}{N-1}w\right),&\text{if }0\leq w\leq \frac{1}{N^{K-1}} \\
	D_{\min}^0,&\text{otherwise,}
	\end{cases}  \label{D-min-weak}
	\end{equation}
	where $D_{\min}^0$ is defined in \eqref{def-D-min-0}. 
	If $\rho<\rho_{\min}^{\text{w}}$, then $D_{\min}=\infty$. 
\end{theorem}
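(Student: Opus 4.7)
The plan is to prove Theorem~2 in two parts, achievability and converse, with the cases $K=2$ and $K\geq 3$ handled differently. For $K=2$, individual leakage coincides with total leakage (as observed right after \eqref{def-weak}), so Theorem~2 follows immediately from Theorem~1 via the identifications $s=w$, $s_t=1/N^{K-1}=1/N$, and $\rho_{\min}^{\text{s}}=\rho_{\min}^{\text{w}}$; all displayed formulas match verbatim. The novel content is therefore the case $K\geq 3$, where the striking claim is $\rho_{\min}^{\text{w}}=0$, i.e., no external common randomness is required.

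\textbf{Achievability for $K\geq 3$.} I will achieve the two endpoints of the piecewise linear tradeoff and fill in the interior by time-sharing. The endpoint $(w,D)=(1/N^{K-1},D_{\min}^0)$ is achieved directly by the TSC code, since \eqref{TSC-performance} and \eqref{TSC-individual-leakage} show it has exactly this download cost and individual leakage, and the regime $w\geq 1/N^{K-1}$ is covered by the same code. For the other endpoint $(w,D)=(0,LN/(N-1))$, my plan is to adapt the SPIR scheme of \cite{Sun-Jafar-SPIR-19IT} by replacing its $L/(N-1)$-symbol external common key $S$ with a fixed linear combination of the messages that does not depend on $k$, for instance the first $L/(N-1)$ coordinates of $W_1+W_2+\cdots+W_K$ over the field $\cX$. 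Since $K\geq 3$, this sum is uniform and independent of each individual $W_{k'}$, so even if the user is able to infer the effective key from the answers it does not consume any individual-leakage budget, while the scheme's download cost and user privacy are unchanged from SPIR. Intermediate $w\in(0,1/N^{K-1})$ are then obtained by time-sharing the two corner schemes.

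\textbf{Converse for $K\geq 3$.} For $w\geq 1/N^{K-1}$ the bound $D\geq D_{\min}^0$ is just the PIR converse of \cite{Sun-Jafar-PIR-capacity-2017IT}, which holds without any security assumption. For $w\leq 1/N^{K-1}$ the goal is $D\geq L(N-w)/(N-1)$. My plan is to adapt the SPIR converse chain from \cite{Sun-Jafar-SPIR-19IT}, inserting the individual leakage bound $I(W_{k'};A_{1:N}^{[k]},Q_{1:N}^{[k]},\bF)\leq wL$ at the step where perfect security was originally used: combining the recovery constraint $H(W_k\mid A_{1:N}^{[k]},\bF)=0$ with the identical-distribution condition \eqref{privacy-identical-distribute} yields $D\geq L+H(A_{1:N}^{[k]}\mid Q_{1:N}^{[k]},\bF,W_k)$, and applying the $K-1$ individual leakage inequalities symmetrically across $k'\in\bar{k}$ should lower-bound the residual entropy by $L(1-w)/(N-1)$, which recovers the desired slope $-1/(N-1)$.

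\textbf{Main obstacle.} The hard part is the converse for small $w$ with $K\geq 3$. Individual leakage is strictly weaker than total leakage: for example, with $K=3$, the quantity $W_2+W_3$ has zero individual leakage on each $W_i$ yet carries $L$ symbols of total information, so one cannot route the individual leakage bound through any total-leakage surrogate and invoke Theorem~1 to finish. A dedicated argument that uses the individual leakage bounds for all $k'\in\bar{k}$ simultaneously, together with the database-symmetry of \eqref{privacy-identical-distribute}, is required to obtain precisely the coefficient $1/(N-1)$ rather than something looser; identifying the correct chain of inequalities is the crux of the proof.
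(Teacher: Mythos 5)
Your reduction of the $K=2$ case to Theorem~\ref{thm-tradeoff-strong} is exactly what the paper does. For $K\geq 3$, the high-level plan for achievability is sound, but there are two differences from the paper worth noting. First, the paper does not time-share between the corner schemes; it combines the TSC code and the zero-leakage scheme via an extra random query bit $F_0$ (probabilistic selection per retrieval), precisely so that the message length stays at $N-1$ for every realization. Time-sharing is acceptable if all you need is the expected download cost in Theorem~\ref{thm-tradeoff-weak}, but it would spoil the message-size result of Theorem~\ref{thm-message-size}, which the paper gets for free from the same construction. Second, your proposed key (first $L/(N-1)$ coordinate of $W_1+\cdots+W_K$) works when $|\cX|\geq 3$ and $N\geq 2$ for essentially the same reason as the paper's choice $S=\sum_k\bigoplus_i W_{k,i}$, but the paper has to treat $|\cX|=2$, $N=2$ separately (doubling the message length and using dual random keys for the two halves): over $\mathbb{F}_2$ with $L=1$ the term $W_{k,(n-F^*)_N}\oplus B\oplus S$ can cancel entire messages, and the independence claim \eqref{WS-PIR-code-security-pre} fails. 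Your sketch silently assumes this cancellation never hurts, which needs an argument for the binary case.

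The converse is where you have a genuine, self-acknowledged gap, and your intuition about how to close it points in a slightly wrong direction. You propose to adapt the \emph{SPIR} converse and to use all $K-1$ individual-leakage constraints ``symmetrically across $k'\in\bar{k}$.'' The paper does neither: it adapts the classical \emph{PIR} converse, but runs the iteration only one step. The key inequality is, for any single pair $k\neq k'$,
\begin{align}
NH(A_{1:N}^{[k]}W_k\mid\bF)\;\geq\;H(A_{1:N}^{[k']}W_kW_{k'}\mid\bF)+(N-1)L,
\end{align}
proved exactly as in Lemma~\ref{lemma-strong-iteration} but without iterating over $k=1,\dots,K-1$. Inserting $I(W_{k'};A^{[k]}_{1:N}Q^{[k]}_{1:N}\bF)\leq wL$ into $(N-1)D+I(W_{k'};\cdot)\geq(N-1)H(A_{1:N}^{[k]}\mid\bF)+I(W_{k'};A^{[k]}_{1:N}\mid\bF)$ and expanding with the recovery constraint gives
\begin{align}
(N-1)D+I(W_{k'};A^{[k]}_{1:N}Q^{[k]}_{1:N}\bF)\;\geq\;NL+\bigl[H(A_{1:N}^{[k']}W_kW_{k'}\mid\bF)-H(A_{1:N}^{[k]}W_kW_{k'}\mid\bF)\bigr],
\end{align}
and averaging this with the $k\leftrightarrow k'$ version kills the bracketed difference and yields $D\geq L\bigl(\tfrac{N}{N-1}-\tfrac{1}{N-1}w\bigr)$. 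So the ``crux'' is a single pairwise symmetrization, not a simultaneous use of all $K-1$ leakage bounds; that extra machinery is unnecessary. Your intermediate inequality $D\geq L+H(A_{1:N}^{[k]}\mid\bF,W_k)$ is correct and compatible with this chain, but as written your plan does not lead to the right coefficient and would need to be replaced by the one-step lemma above.
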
 
%%%--------------------- END thm-tradeoff-weak ------------------------------
\begin{proof}
	The proof is given in \Cref{section-weak-proof}. 
\end{proof}
\begin{remark}
	If $\rho<\rho_{\min}^{\text{w}}$, it is impossible to meet all the requirements of the PIR system simultaneously. 
	In this case, the capacity $C=0$.
\end{remark}
\begin{remark}\label{remark-K=2}
	For $K=2$, the individual leakage is equal to the total leakage. The minimum download cost for the same value of individual and total leakage levels are the same, which can be easily verified from \eqref{D-min-strong} and \eqref{D-min-weak}. 
	Similarly, the minimum amount of common randomness are also equal to each other, i.e., $\rho_{\min}^{\text{s}}=\rho_{\min}^{\text{w}}$, which can be verified from \eqref{min-key-size-strong} and \eqref{min-key-size-weak}. 
	This can be seen from the coding scheme in \Cref{section-WS-PIR-code}. 
\end{remark}
\begin{remark}\label{remark-compare-same-leakage}
	If the amount of total leakage is $s'\triangleq \left(1+N+\cdots+N^{K-2}\right)\cdot w$, after a simple substitution of \eqref{D-min-strong},  the expression of minimum download cost in \eqref{D-min-strong} is equal to that in \eqref{D-min-weak}. Therefore, for the same download cost, the minimum total leakage is always $(1+N+\cdots+N^{K-2})$ times the minimum individual leakage.
\end{remark}

%In view of \eqref{min-key-size-weak} and \eqref{D-min-weak}, $D_{\min}$ shall be regarded as a bivariate function of $\rho$ and $w$. 
%For a given $w$, the dependency of $D_{\min}$ on $\rho$ only appears at the threshold $\rho_{\min}^{\text{w}}$ 
%so that i) $D_{\min}=\infty$ if $\rho<\rho_{\min}^{\text{w}}$; ii) $D_{\min}<\infty$ and $D_{\min}$ is independent of $\rho$ if $\rho\geq \rho_{\min}^{\text{w}}$. 

From \eqref{min-key-size-weak}, it is seen that $\rho_{\min}^{\text{w}}=0$ for $K\geq 3$, which is the case when one non-desired message can be used as the encryption key to protect another individual non-desired message. 
For $K=2$, there is only one non-desired message, and this makes it impossible to have another non-desired messages as the encryption key. In this case, $\rho_{\min}^{\text{w}}$ is a linear function of $w$ as shown in \eqref{min-key-size-weak}. 
For $\rho\geq \rho_{\min}^{\text{w}}$, the dependency of $D_{\min}$ on $(\rho,w)$ is illustrated in Fig.~\ref{fig-tradeoff-Dmin-rho-w} and Fig.~\ref{fig-tradeoff-weak}. 

%%------------------------ BEGIN-figure-tradeoff-strong ---------------------------------
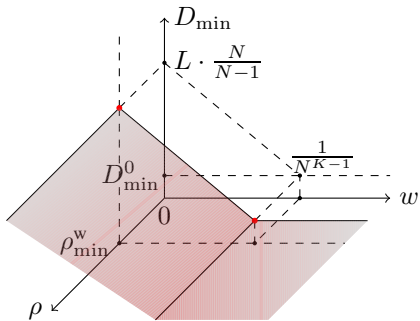
\begin{figure}[!ht]
	\centering
	\begin{tikzpicture}[scale=0.6]
	\draw[->] (0,0)--(0,4); \node [right] at (0,4) {$D_{\min}$};
	\draw[->] (0,0)--(5,0); \node [right] at (5,0) {$w$};
	\draw[->] (0,0)--(-2.5,-2.5); \node [left] at (-2.5,-2.5) {$\rho$};
	\node [below] at (0,0) {$0$};
	
	\draw (-1,2)--(-3.5,-0.5);  \draw[dashed] (-1,2)--(-1,3.6); 
	\draw (2,-0.5)--(-1,2);
	\draw (4.5,-0.5)--(2,-0.5)--(-0.2,-2.7);
	
	\draw[dashed] (-1,2)--(0,3)--(3,0.5)--(3,0); \filldraw (0,3) circle (1pt); \node [right] at (0,3) {$L\cdot\frac{N}{N-1}$}; 
	\draw[dashed] (0,0.5)--(3,0.5)--(5,0.5); \filldraw (3,0.5) circle (1pt); 
	\filldraw (0,0.5) circle (1pt); \node [left] at (0.1,0.5) {$D_{\min}^0$}; 
	\draw[dashed] (2,-0.5)--(3,0.5); \filldraw (3,0) circle (1pt); \node [above] at (3.5,0.35) {$\frac{1}{N^{K-1}}$}; 
	\draw[dashed] (-1,2)--(-1,-1); \filldraw (-1,-1) circle (1pt); \node [left] at (-1,-1) {$\rho_{\min}^{\text{w}}$}; 
	\draw[dashed] (-1,-1)--(2,-1)--(3,0) (2,-0.5)--(2,-1)--(4,-1); \filldraw (2,-1) circle (1pt); 
	
	\filldraw[red] (-1,2) circle (1.5pt); \filldraw[red] (2,-0.5) circle (1.5pt); 
	%	\fill[gray!20,nearly transparent] (-3.6,-0.6)--(-1,2)--(2,-0.5)--(-0.7,-3.2)--cycle; 
	%	\fill[gray!20,nearly transparent] (2,-0.5)--(-0.7,-3.2)--(1.8,-3.2)--(4.5,-0.5)--cycle; 
	\fill[top color=gray!10, bottom color=red,shading=axis,shading angle=45,opacity=0.15] (-3.5,-0.5)--(-1,2)--(2,-0.5)--(-0.2,-2.7)--cycle; 
	\fill[top color=red, bottom color=gray!10,shading=axis,shading angle=90,opacity=0.15] (2,-0.5)--(-0.2,-2.7)--(2.3,-2.7)--(4.5,-0.5)--cycle; 
	%	\node [right] at (2.5,1.3) {$D_{\min}^0\triangleq L\cdot\left(1+\frac{1}{N}+\cdots+\frac{1}{N^{K-1}}\right)$}; 
	\end{tikzpicture}
	\caption{The dependency of $D_{\min}$ on $(\rho,w)$.}
	\label{fig-tradeoff-Dmin-rho-w}
\end{figure}
%%-------------------------- END-figure-tradeoff-strong ---------------------------------
%The dashed lines project the corner points to the axises that show their values. 
%The value of $D_{\min}$ with respective to $(\rho,w)$ is given by the shaded area. 
%We see that for a given value of $w$, $D_{\min}$ is a constant and thus independent with $\rho$. 
%%For a given value of $\rho$ such that  $\rho \geq \rho_{\min}^{\text{w}}$, t
%The shaded area is projected onto the $D_{\min}$-$w$ plane, which is drawn in Fig.~\ref{fig-tradeoff-weak}. 

%%------------------------ BEGIN-figure-tradeoff-weak ---------------------------------
\begin{figure}[!ht]
	\centering
	\begin{tikzpicture}[scale=0.7]
	\draw[->] (0,0)--(0,2.5); \node [right] at (0,2.5) {$D_{\min}$};
	\draw[->] (0,0)--(3.5,0); \node [right] at (3.5,0) {$w$};
	\node [below] at (0,0) {$0$};
	
	\filldraw [red](0,1.5) circle (1pt); \node [left] at (0,1.5) {$L\cdot\frac{N}{N-1}$};
	\filldraw (2.0,0) circle (1pt); \node [below] at (2.0,0) {$\frac{1}{N^{K-1}}$}; \draw[dashed] (2.0,0)--(2.0,0.5);
	\filldraw (0,0.5) circle (1pt); \node [left] at (0,0.5) {$D_{\min}^0$}; \draw[dashed] (0,0.5)--(2.0,0.5);
	
	\draw[red] (0,1.5)--(2.0,0.5)--(3.3,0.5); \filldraw [red](2.0,0.5) circle (1pt);
	%	\fill[green!20] (0,2.5)--(0,1.5)--(2.0,0.5)--(3.5,0.5)--cycle;
	\end{tikzpicture}
	\caption{Tradeoff curve between $D_{\min}$ and $w$.}
	\label{fig-tradeoff-weak}
\end{figure}
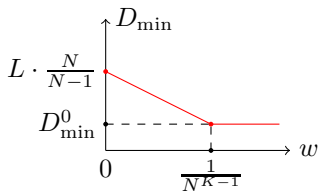
%%------------------------ BEGIN-figure-tradeoff-weak ---------------------------------

The red line in Fig.~\ref{fig-tradeoff-weak} is the tradeoff curve  between $D_{\min}$ and $w$ for $\rho\geq \rho_{\min}^{\text{w}}$. 
Similar to the observations of Fig.~\ref{fig-tradeoff-strong}, we see from Fig.~\ref{fig-tradeoff-weak} that for $w=0$, 
the problem becomes weakly secure PIR (WS-PIR) which differs from SPIR only in the type of security. 
The capacity of WS-PIR can be obtained by calculating $\frac{L}{D_{\min}}$ at $w=0$ from \eqref{D-min-weak}, 
which is given in the following corollary. 
\begin{corollary}\label{lemma-WS-PIR-capacity}
	If the amount of common randomness satisfies 
	\begin{equation}
	\rho\geq
	\begin{cases}
	\frac{1}{N-1},&\text{ if }K=2\\
	0,&\text{ if }K\geq 3, 
	\end{cases} \label{WS-PIR-key-size}
	\end{equation}
	the capacity of WS-PIR is given by 
	\begin{equation}
	C_{\text{WS-PIR}}=1-\frac{1}{N}.  \label{WS-PIR-capacity}
	\end{equation}
	If \eqref{WS-PIR-key-size} is not satisfied, the capacity $C_{\text{WS-PIR}}=0$. 
\end{corollary}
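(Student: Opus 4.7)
The plan is to obtain Corollary \ref{lemma-WS-PIR-capacity} as an immediate specialization of Theorem \ref{thm-tradeoff-weak} at the extreme point $w = 0$, which is precisely the WS-PIR regime (no individual message leakage at all). Since the capacity in our notation is $C = L/D_{\min}$ whenever $D_{\min} < \infty$, and $C = 0$ whenever $D_{\min} = \infty$, the entire corollary reduces to evaluating the two piecewise formulas \eqref{min-key-size-weak} and \eqref{D-min-weak} at $w = 0$.

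First I would substitute $w = 0$ into \eqref{min-key-size-weak}. The $K=2$ branch gives $\rho_{\min}^{\text{w}}|_{w=0} = \frac{1}{N-1} - \frac{N}{N-1}\cdot 0 = \frac{1}{N-1}$, while the $K\geq 3$ branch gives $\rho_{\min}^{\text{w}}|_{w=0} = 0$. Together these two threshold values are exactly the randomness condition \eqref{WS-PIR-key-size}. Thus, whenever \eqref{WS-PIR-key-size} fails, Theorem \ref{thm-tradeoff-weak} asserts $D_{\min} = \infty$, from which $C_{\text{WS-PIR}} = 0$ follows immediately.

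Next, assuming \eqref{WS-PIR-key-size} holds, I would substitute $w=0$ into \eqref{D-min-weak}. Since $0 \in [0, 1/N^{K-1}]$, the first branch applies and yields $D_{\min} = L\cdot\bigl(\frac{N}{N-1} - \frac{1}{N-1}\cdot 0\bigr) = L\cdot\frac{N}{N-1}$. Taking the ratio gives the claimed capacity
\begin{equation}
C_{\text{WS-PIR}} = \frac{L}{D_{\min}} = \frac{N-1}{N} = 1 - \frac{1}{N}.
\end{equation}

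There is no real obstacle here, as all the work is already contained in Theorem \ref{thm-tradeoff-weak}; the corollary is essentially a boundary-value consequence. The only point worth flagging explicitly is that WS-PIR is the $w=0$ case of our individual-leakage formulation (not of the total-leakage formulation), so the relevant threshold is $\rho_{\min}^{\text{w}}$, not $\rho_{\min}^{\text{s}}$. In particular, for $K \geq 3$, perfect individual security of all non-desired messages can be achieved without any common randomness at the databases, which is the qualitatively interesting contrast with the SPIR setting.
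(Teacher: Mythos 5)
Your proposal is correct and follows essentially the same route as the paper, which obtains the corollary by evaluating $L/D_{\min}$ at $w=0$ using the formulas \eqref{min-key-size-weak} and \eqref{D-min-weak} from Theorem \ref{thm-tradeoff-weak}. Your additional remark correctly identifying $\rho_{\min}^{\text{w}}$ (rather than $\rho_{\min}^{\text{s}}$) as the relevant threshold is a useful clarification but does not change the substance.
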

From this corollary, we observe that i) $C_{\text{WS-PIR}}=C_{\text{SPIR}}$, i.e., requiring strong security does not increase download cost compared to requiring weak security;
ii)  compared to weak security, we achieve strong security at the cost of a larger amount of common randomness. 
It is straightforward to verify that the optimal SPIR code in \cite{Sun-Jafar-SPIR-19IT} is also optimal for WS-PIR. 
In \Cref{section-WS-PIR-code}, we propose another optimal code for WS-PIR with $K\geq 3$, where the databases do not need to share common randomness.

\begin{remark}
	Consider a specialization of $s$ and $w$ between the corner point in Fig.~\ref{fig-tradeoff-strong} and Fig.~\ref{fig-tradeoff-weak}. 
	For an integer $\alpha\geq K-1$, we write $s$ and $w$ in the form of 
	\begin{equation}
	s=\frac{1}{N^{\alpha}}\cdot\frac{N^{K-1}-1}{N-1}   \label{specialization-strong}
	%=\frac{1}{N^\alpha}+\frac{1}{N^{\alpha-1}}+\cdots+\frac{1}{N^{\alpha-K+2}}
	\end{equation}
	and 
	\begin{equation}
	w=\frac{1}{N^\alpha}.   \label{specialization-weak}
	\end{equation}
	Denote the capacity of the PIR system for total and individual leakage constraints in \eqref{specialization-strong} and \eqref{specialization-weak} by $C_S$ and $C_W$, respectively. 
	Substituting \eqref{specialization-strong} into \Cref{thm-tradeoff-strong} and \eqref{specialization-weak} into \Cref{thm-tradeoff-weak}, 
	we obtain that for $\rho\geq \max\{\rho_{\min}^{\text{s}},\rho_{\min}^{\text{w}}\}$, 
	\begin{equation}
	C_S=C_W=\left(1+\frac{1}{N}+\frac{1}{N^2}+\cdots+\frac{1}{N^\alpha}\right)^{-1},  \label{capacity-strong-weak}
	\end{equation}
	which has a common form as $C_{\text{PIR}}=\big(1+\frac{1}{N}+\frac{1}{N^2}+\cdots+\frac{1}{N^{K-1}}\big)^{-1}$ and 
	$C_{\text{SPIR}}=C_{\text{WS-PIR}}=\big(1+\frac{1}{N}+\frac{1}{N^2}+\cdots+\frac{1}{N^{\infty}}\big)^{-1}$. 
	This demonstrates how the capacity decreases from  $C_{\text{PIR}}$ to $C_{\text{SPIR}}$ or $C_{\text{WS-PIR}}$ 
	while strengthening the security requirement (i.e., decrease of $s$ or $w$). 
	For $s$ and $w$ defined in \eqref{specialization-strong} and \eqref{specialization-weak}, when we increase $\alpha$, 
	the capacity of the system decreases by adding a term of $\frac{1}{N^\alpha}$ in the denominator of the capacity expression. 
	By increasing $\alpha$ from $K-1$ to $\infty$, $s$ (or $w$) decreases from the threshold $s_t$ (or $w_t$) to zero which means perfect security, 
	and the capacity finally decreases from $\left(1+\frac{1}{N}+\frac{1}{N^2}+\cdots+\frac{1}{N^{K-1}}\right)^{-1}$ to $\left(1+\frac{1}{N}+\frac{1}{N^2}+\cdots+\frac{1}{N^{\infty}}\right)^{-1}=1-\frac{1}{N}$. 
\end{remark}

%%======================================================================
\subsection{Minimum Message Size}\label{section-message-size}
The message size $L\log_2|\cX|$ is also an important factor to consider in practice~\cite{Tian-Sun-Chen-PIR-IT19}. 
The minimum message size is highly dependent on the download cost. 
%The extreme case of downloading everything can have message size of as small as 1, and
It was shown in \cite{Tian-Sun-Chen-PIR-IT19} that the message size of capacity-achieving PIR code is greater than or equal to $N-1$ 
if the code is uniformly decomposable. 
Before stating our results on the message size, we first define the notion of uniformly decomposable code, similar to that in~\cite{Tian-Sun-Chen-PIR-IT19}. 

A PIR code is called {\it decomposable}, if $\cY$ is a finite Abelian group, and for each $n\in[1:N]$ and $q\in\cQ_n$, 
the answer $A_n^{[k]}=\varphi_n(q,W_{1:K},S)$ can be written in the form 
\begin{align}
\varphi_n(q,W_{1:K},S)=\left(\varphi_{n,1}^{(q)}(W_{1:K},S),\cdots,\varphi_{n,\ell_n}^{(q)}(W_{1:K},S)\right)   \label{decompose-def-1}
\end{align}
where for $i\in\{1,2,\cdots,\ell_n\}$, 
\begin{align}
\varphi_{n,i}^{(q)}(W_{1:K},S)&=\varphi_{n,i,1}^{(q)}(W_{1})\oplus\cdots\oplus\varphi_{n,i,K}^{(q)}(W_{K})  \nonumber  \\
&\qquad\qquad\qquad\qquad \oplus\varphi_{n,i,K+1}^{(q)}(S),  \label{decompose-def-2}
\end{align}
where $\oplus$ denotes addition in the finite group $\cY$, 
and each $\varphi_{n,i,k}^{(q)}$ ($k\in[1:K]$) is a mapping $\cX^L\rightarrow\cY$ and $\varphi_{n,i,K+1}^{(q)}$ is a mapping $\cS\rightarrow\cY$. 

Furthermore, a decomposable code is called {\it uniform} if any component function $\varphi_{n,i,k}^{(q)}$ in \eqref{decompose-def-2} either satisfies the conditions 
\begin{align}
&\left|\left\{x\in\cX^L\!:\!\varphi_{n,i,k}^{(q)}(x)=y\right\}\right|=\left|\left\{x\in\cX^L\! :\! \varphi_{n,i,k}^{(q)}(x)=y'\right\}\right|,  \nonumber \\
&\hspace{4cm}  y,y'\in\cY,~k\in[1:K],  \label{uniform-def-1}
\end{align}
and
{\small\begin{align}
	&\left|\left\{x\in\cS:\varphi_{n,i,K+1}^{(q)}(x)=y\right\}\right|=\left|\left\{x\in\cS:\varphi_{n,i,K+1}^{(q)}(x)=y'\right\}\right|, \nonumber \\
	&\hspace{6cm} y,y'\in\cY,  \label{uniform-def-1.5}
	\end{align}}
or it maps everything to the same value, i.e., 
\begin{equation}
\varphi_{n,i,k}^{(q)}(x)=\varphi_{n,i,k}^{(q)}(x'),~x,x'\in\begin{cases}\cX^L,&\text{if }k\in[1:K]\\\cS,&\text{if }k=K+1.\end{cases}   \label{uniform-def-2}
\end{equation}

The terminology ``decomposable" refers to the property in \eqref{decompose-def-2} that each coded symbol is a summation of the component functions on each individual message. 
A uniformly decomposable code has an additional property that each mapping $\varphi_{n,i,k}^{(q)}$ either preserves the uniform probability distribution or induces a deterministic value. 
The notion of decomposable codes considerably generalizes the notion of linear codes. In particular, linear codes on finite fields are uniformly decomposable. 

The following lemma and theorem are our main results on the minimum message size. 
%%------------------------- BEGIN lemma-message-size --------------------------------
\begin{lemma}\label{lemma-message-size-multiple}
	The minimum message length of any Pareto optimal uniformly decomposable PIR code with either individual leakage constraint $w$ or total leakage constraint $s$ should be a multiple of $\log_{|\cX|}|\cY|$. 
\end{lemma}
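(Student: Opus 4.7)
The plan is to leverage the uniform-decomposable structure to quantize the relevant entropies in units of $\log_{|\cX|}|\cY|$, and then use Pareto optimality to pin $L$ onto an integer multiple of this unit. First I would fix the desired index $k$ and a single realization $q=q_{1:N}^{[k]}$ of the queries that occurs with positive probability. By the decomposability property \eqref{decompose-def-1}--\eqref{decompose-def-2}, each answer symbol splits into an additive sum in $\cY$ of component maps $\varphi_{n,i,j}^{(q)}(\cdot)$; by the uniformity property \eqref{uniform-def-1}--\eqref{uniform-def-2}, each such map is either constant (contributing entropy $0$) or carries a uniform input to a uniformly distributed element of $\cY$ (contributing entropy exactly $\log_{|\cX|}|\cY|$).

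Second, I would use the recovery constraint \eqref{recovery-constraint} together with conditioning on $\bF$ (which, together with $k$, determines $q$) and on $W_{\bar k},S$ to isolate the part of the answers that depends on $W_k$. Because the $W_k$-contribution appears additively and separately in every answer symbol, this conditioning yields a deterministic $\cY^{m_k^{(q)}}$-valued function of $W_k$, where $m_k^{(q)}$ counts the non-constant $W_k$-components across all $(n,i)$ under query $q$. The joint entropy of that function is at most $m_k^{(q)}\log_{|\cX|}|\cY|$ and, by the recovery requirement, must be at least $L=H(W_k)$, giving the upper bound $L \le m_k^{(q)}\log_{|\cX|}|\cY|$.

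The main obstacle, and the third step, is to upgrade this inequality to an exact equality by invoking Pareto optimality. The intended perturbation argument is that if the inequality were strict for some realized $q$, then one of the active $W_k$-components could be removed or merged into another without disturbing (i) the privacy identical-distribution constraint \eqref{privacy-identical-distribute}, (ii) the leakage quantities in \eqref{def-strong}--\eqref{def-weak}, which depend only on the induced joint distribution of $(Q_{1:N}^{[k]},A_{1:N}^{[k]},W_{1:K})$ that uniformity preserves, or (iii) the recoverability of $W_k$; yet the download cost $D$ in \eqref{def-D}, being proportional to the total number of transmitted $\cY$-symbols, would drop strictly, contradicting Pareto optimality on the tradeoff curves of Theorems~\ref{thm-tradeoff-strong} and~\ref{thm-tradeoff-weak}. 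Carrying out this swap rigorously inside a general decomposable (rather than purely linear) code, and verifying that both the individual and the total leakage constraints survive the perturbation, is the technical heart of the argument. Once equality is in hand, $L = m_k^{(q)}\log_{|\cX|}|\cY|$ is automatically a multiple of $\log_{|\cX|}|\cY|$, as claimed.
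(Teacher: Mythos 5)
Your first two steps are sound: the uniform-decomposable structure does quantize each non-constant component's contribution to exactly $\log_{|\cX|}|\cY|$, and the recovery constraint does give $L\le m_k^{(q)}\log_{|\cX|}|\cY|$ for every realized $q$. But the third step --- the perturbation argument that upgrades this to equality --- is the heart of the proof and it does not hold up. You claim that if the inequality were strict one could ``remove or merge'' an active $W_k$-component without affecting privacy, leakage, or recoverability, thereby reducing the download cost. This is not justified, and I do not see how to make it work. First, removing a component $\varphi_{n,i,k}^{(q)}$ that is mixed additively with $W_{\bar k}$- or $S$-components does not reduce the number of transmitted $\cY$-symbols $\ell_n$, so the download cost need not drop. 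Second, privacy in \eqref{privacy-identical-distribute} couples the answer for query $q_n$ across \emph{all} desired indices $k$; any modification to $\varphi_n(q_n,\cdot,\cdot)$ affects the retrieval of $W_{k'}$ for $k'\ne k$ as well, and you would have to exhibit a replacement that simultaneously preserves decodability for every index. Third, Pareto optimality constrains the \emph{expected} download cost $D$, not the per-realization symbol count; a single wasteful $q$ is entirely consistent with a Pareto optimal code, so a per-$q$ strictness cannot by itself contradict optimality. Finally, even if equality held for some $q$, you would obtain $L=m_k^{(q)}\log_{|\cX|}|\cY|$ for a particular Pareto optimal code, not that the \emph{minimum} $L$ over all such codes is a multiple, which is what the lemma actually asserts.

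The paper's proof takes a different and more direct route that you should compare against. It argues by contradiction: assume the minimum $L$ satisfies $L<(N-1)\log_{|\cX|}|\cY|$ with $L/\log_{|\cX|}|\cY|$ non-integer, then plugs into the exact Pareto-optimal download cost $D_\cY=\frac{N-w}{N-1}\cdot\frac{L}{\log_{|\cX|}|\cY|}$ from \Cref{thm-tradeoff-weak} (this explicit formula, which your argument never invokes, is the crucial ingredient) to deduce $0<D_\cY-\frac{L}{\log_{|\cX|}|\cY|}<1$, forcing the existence of query realizations with exactly $\lceil L/\log_{|\cX|}|\cY|\rceil$ downloaded symbols. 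It then uses property \textbf{P}1 (independence of the $N$ answers) to pin $H(A_{1:N}^{(q)})=\lceil L/\log_{|\cX|}|\cY|\rceil\log_{|\cX|}|\cY|$, from which $H(A_{1:N}^{(q)}|W_k)$ is strictly between $0$ and $\log_{|\cX|}|\cY|$. But your own quantization observation then kills this: conditioning on $W_k$ reduces each answer symbol to a variable $U_{n,i}^{(q)}$ that is either deterministic or uniform over $\cY$, so the conditional entropy is either $0$ or at least $\log_{|\cX|}|\cY|$. That is the contradiction. In short, the quantization observation is the right ingredient, but it should be combined with the explicit download-cost formula and an averaging argument, not with a structural perturbation.
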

\begin{proof}
	The proof can be found in Appendix \ref{section-message-size-multiple-proof}.
\end{proof}
%%------------------------- END lemma-message-size --------------------------------
\begin{remark}
	The lemma shows that the minimum $L$ is a multiple of $\log_{|\cX|}|\cY|$. However, the term $\log_{|\cX|}|\cY|$ is not necessarily an integer. 
\end{remark}

%In \Cref{thm-message-size}, we do not require the Pareto optimal codes have the minimum amount of common randomness.
We can see from the Pareto optimal code constructions in \Cref{section-strong-proof,section-weak-proof} that for most cases, we can simultaneously achieve a message length of $N-1$ and the minimum amount of common randomness characterized in \Cref{thm-tradeoff-strong,thm-tradeoff-weak}. 
The only exception is the individual leakage case with $K\geq 3$ and $|\cX|=N=2$, for which the message length is twice the minimum. 
Except for this case, we characterize the minimum message size in the following theorem. 
%%------------------------- BEGIN thm-message-size --------------------------------
\begin{theorem}\label{thm-message-size}
	Except for the case of $K\geq 3,|\cX|=N=2$, the minimum message size of any Pareto optimal uniformly decomposable PIR codes achieving the minimum amount of common random randomness with either individual leakage constraint $w$ or total leakage constraint $s$ is $(N-1)\log_2|\cY|$; in particular, it achieves $N-1$ when $|\cY|=|\cX|=2$. 
\end{theorem}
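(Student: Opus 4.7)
The proof divides naturally into an achievability part and a matching converse. For achievability I would build directly on the explicit code constructions given in the proofs of \Cref{thm-tradeoff-strong} and \Cref{thm-tradeoff-weak}. In the canonical case $|\cY|=|\cX|=2$, suitable specializations of the TSC-style and SPIR-style constructions use $L=N-1$ message bits, are uniformly decomposable (in fact linear over $\bF_2$), Pareto optimal, and minimal in common randomness. For an arbitrary finite Abelian group $\cY$, I would lift the binary code by packing $\log_{|\cX|}|\cY|$ consecutive message symbols into a single $\cY$-symbol and carrying out all additions inside $\cY$; the lifted code has $L=(N-1)\log_{|\cX|}|\cY|$ and message size exactly $(N-1)\log_2|\cY|$, and the Pareto optimality, identical-distribution, and recovery properties transfer from the binary code to the lifted code by a direct symbol-by-symbol application of \eqref{decompose-def-1}--\eqref{decompose-def-2}.

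For the converse, \Cref{lemma-message-size-multiple} already reduces the task to showing that the positive integer $m \triangleq L/\log_{|\cX|}|\cY|$ satisfies $m\geq N-1$. My plan is to work at the extremal Pareto point ($s=0$ for the total-leakage case, $w=0$ for the individual-leakage case with $K=2$, and the appropriate corner otherwise), where \Cref{thm-tradeoff-strong} and \Cref{thm-tradeoff-weak} pin down $D_{\min}=LN/(N-1)$ and $H(S)=L/(N-1)$ exactly. In $\cY$-symbol units the aggregate answer length is $mN/(N-1)$ and the key entropy is $(m/(N-1))\log_2|\cY|$ bits. I would then invoke uniform decomposability: for each query $q$, each component function $\varphi_{n,i,k}^{(q)}$ is by \eqref{uniform-def-1}--\eqref{uniform-def-2} either constant or a surjection onto $\cY$ that preserves the uniform distribution. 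Combining the recovery constraint \eqref{recovery-constraint}, the privacy identity \eqref{privacy-identical-distribute}, and the leakage constraint \eqref{def-strong} (or \eqref{def-weak}), a rank-counting argument over $\cY$ on the message-component functions shows that if $m<N-1$ then, across any query realization, the non-constant components on $W_k$ contributed by the $N$ databases cannot jointly span $\cY^m$ once the masking needed to satisfy the leakage budget is accounted for. Extension from the corner to the full Pareto frontier then follows from the piecewise-linear structure of \eqref{D-min-strong} and \eqref{D-min-weak} by interpolating with the classical PIR corner at $s=s_t$ (resp.\ $w=1/N^{K-1}$).

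The hardest step will be the rank-counting argument in the converse, because uniform decomposability is a per-coded-symbol structural constraint while Pareto optimality is stated as an aggregate expected-download constraint; the argument must carefully track, for each query realization, which of the $N$ databases contribute non-constant components on $W_k$ and use the privacy symmetry \eqref{privacy-identical-distribute} across the $K$ hypotheses to force enough spread of these contributions to saturate the $N-1$ bound. The excluded case $K\geq 3,\,|\cX|=N=2$ needs separate treatment: with a binary alphabet and $K\geq 3$, one non-desired message can serve as a one-time pad for another so that $\rho=0$ suffices, but the additional parity relation required to decouple the two pad roles in $\bF_2$ forces $L=2$ rather than $L=1$, and this should be confirmed by a direct finite check on binary uniformly decomposable codes.
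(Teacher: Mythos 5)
Your achievability plan is roughly in the spirit of the paper's (which simply points back to the explicit constructions in Sections IV and V), although the paper does not need the ``lifting'' step: the constructed codes already use $L=N-1$ symbols over $\cX=\cY$, which gives message size $(N-1)\log_2|\cY|$ directly.

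The converse, however, has a genuine gap. First, you plan to establish the bound only at the extremal corner ($s=0$, $w=0$) and then ``interpolate'' along the piecewise-linear frontier. But the theorem asserts the message-size bound for \emph{every} Pareto-optimal code with leakage level $s\in[0,s_t]$ (resp.\ $w\in[0,1/N^{K-1}]$), and there is no a priori reason the bound at one corner propagates: a code operating at a strictly positive leakage level has a looser security constraint and could, in principle, exploit that slack to shrink the message. The paper's argument (Lemma~\ref{lemma-independence-property}) works uniformly over the full range of $w$, which is precisely why the leakage level $w$ appears inside the key inequalities \eqref{msg-siz-pf-probability0}, \eqref{msg-siz-pf-prob-Ank}, and \eqref{msg-siz-pf-inequality-1}. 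Second, the ``rank-counting'' argument you gesture at is not substantiated, and it is unclear how to make it rigorous here: the component maps $\varphi_{n,i,k}^{(q)}$ are not linear, so ``span'' is not well-defined, and uniform decomposability only tells you each coordinate is either constant or balanced over $\cY$. The paper's proof is of an entirely different character---it is a probabilistic counting argument: assuming $L\leq(N-2)\log_{|\cX|}|\cY|$, one shows that queries with the minimum download $d=L/\log_{|\cX|}|\cY|$ must occur with probability at least $P^{(0)}$ in \eqref{msg-siz-pf-probability0}; for those queries the answers determine $W_k$ exactly, so some database $n$ ``sees'' a large expected fraction of $W_k$ (this is what $T_n^{[k]}$ in \eqref{msg-siz-pf-prob-Ank} quantifies); privacy \eqref{privacy-identical-distribute} forces the same to happen when the user requests $k'\neq k$, giving the leakage lower bound \eqref{msg-siz-pf-leakage}, which for $w\leq 1/N^{K-1}$ and $N\geq 2$ (and the exclusion $|\cX|=N=2,K\geq3$ handled separately) contradicts \eqref{def-weak}. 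Likewise, for total leakage the paper does not rank-count but instead shows the equality conditions in the converse of Theorem~\ref{thm-tradeoff-strong} reproduce the properties \textbf{P}1--\textbf{P}3 of~\cite{Tian-Sun-Chen-PIR-IT19} and then cites that paper's message-size argument. You would need to supply an argument of comparable strength to close the converse; as written, the rank-counting step is a placeholder rather than a proof.
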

\begin{proof}
	Except for the case of $K\geq 3,|\cX|=N=2$, we have designed Pareto optimal codes achieving simultaneously 
	the minimum amount of common random randomness and the minimum message size. 
	Thus, we only need to prove the lower bound $L\log_2|\cX|\geq (N-1)\log_2|\cY|$, which is equivalent to $L\geq (N-1)\log_{|\cX|}|\cY|$. 
	The details can be found in Appendix~\ref{section-message-size-proof}.
\end{proof}
%%------------------------- END thm-message-size --------------------------------
\begin{remark}
	It is remarkable that the minimum message size of the Pareto optimal uniformly decomposable PIR codes with leakage constraints $w$ or $s$ is equal to $N-1$, which is the same as that of optimal uniformly decomposable classical PIR codes without security constraints. 
\end{remark}

%%======================================================================
\subsection{Relation to Shannon Cipher System}
In Shannon's imperfect secrecy system in \cite[pp. 71-72]{raymondbook}, 
$X$ is the plain text, $Y$ is the cipher text, and $Z$ is the encryption key. 
Since $X$ can be recovered from $Y$ and $Z$, we have $H(X|Y, Z) = 0$. 
It was proved in \cite{raymondbook} that this constraint implies
\begin{equation}
H(Z)\geq H(X)-I(X;Y).  \label{imperfect-key-size}
\end{equation}
This can be intuitively explained as follows: the key size is greater than or equal to amount of information that we need to protect, 
which is the difference between total amount of information $H(X)$ and the amount of information leakage $I(X;Y)$. 
The tradeoff between the minimum key size $H(Z)$ and the amount of information leakage $I(X;Y)$ is illustrated in Fig.~\ref{fig-comparison-imperfect}. 
%%------------------------ BEGIN-figure-tradeoff-weak ---------------------------------
\begin{figure}[!ht]
	\centering
	\begin{subfigure}{0.23\textwidth}
		\centering
		\begin{tikzpicture}[scale=0.65]
		\draw[->] (0,0)--(0,2.5); \node [right] at (0,2.5) {$H(Z)$};
		\draw[->] (0,0)--(3.0,0); \node [above] at (3.0,0) {$I(X;Y)$};
		\node [below] at (0,0) {$0$};
		
		\filldraw [red](0,1.5) circle (1pt); \node [left] at (0,1.5) {$H(X)$};
		\node [below] at (1.5,0) {$H(X)$};
		
		\draw[red] (0,1.5)--(1.5,0)--(2.5,0); \filldraw [red](1.5,0) circle (1pt);
		\end{tikzpicture}
		\caption{Tradeoff between $H(Z)$ and information leakage $I(X;Y)$.}
		\label{fig-comparison-imperfect}
	\end{subfigure}
	~
	\begin{subfigure}{0.23\textwidth}
		\centering
		\begin{tikzpicture}[scale=0.65]
		\draw[->] (0,0)--(0,2.5); \node [right] at (0,2.5) {$H(S)$};
		\draw[->] (0,0)--(3.0,0); \node [above] at (3.2,0) {$I(W_{\bar{k}};A_{1:N}^{[k]}\bF)$};
		\node [below] at (0,0) {$0$};
		
		\filldraw [red](0,1.5) circle (1pt); \node [left] at (0,1.5) {$\frac{1}{N-1}L$};
		\node [below] at (1.5,0) {$s_t L$};
		
		\draw[red] (0,1.5)--(1.5,0)--(2.5,0); \filldraw [red](1.5,0) circle (1pt);
		\end{tikzpicture}
		\caption{Tradeoff between $H(S)$ and total leakage $I(W_{\bar{k}};A_{1:N}^{[k]}\bF)$.}
		\label{fig-comparison-PIR}
	\end{subfigure}
	\caption{Comparison of encryption key size between imperfect secrecy system and PIR system}
	\label{fig-comparison}
\end{figure}
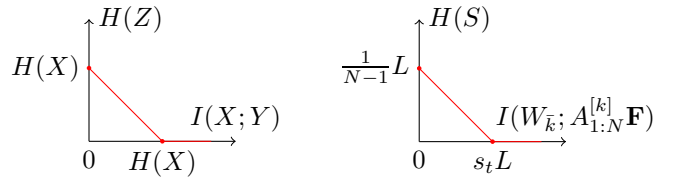
%%------------------------ END-figure-tradeoff-weak ---------------------------------

Our result on the lower bound of common randomness in \eqref{min-key-size-strong} of \Cref{thm-tradeoff-strong} presents the similar form as \eqref{imperfect-key-size}, 
which can be written as 
\begin{equation}
H(S)\geq \frac{1}{N-1}L-\frac{N^{K-1}}{N^{K-1}-1}I(W_{\bar{k}};A_{1:N}^{[k]}\bF). \label{imperfect-PIR-size}
\end{equation}
This can be interpreted as follows, where we focus on one particular answer. 
%From the optimal code in \Cref{section-strong-achievability}, we see that 
The average size of any individual answer should be at least $\frac{1}{N-1}L$, 
and the term $\frac{N^{K-1}}{N^{K-1}-1}I(W_{\bar{k}};A_{1:N}^{[k]}\bF)$ in \eqref{imperfect-PIR-size} can roughly be viewed as the average amount of information leakage per answer. 
The amount of common randomness must be greater than or equal to the amount of information we need to protect,
which is the difference between the total amount of information in an answer $\frac{1}{N-1}L$ and the information leakage of that answer $\frac{N^{K-1}}{N^{K-1}-1}I(W_{\bar{k}};A_{1:N}^{[k]}\bF)$. 
The tradeoff between $H(S)$ and $I(W_{\bar{k}};A_{1:N}^{[k]}\bF)$ is illustrated in Fig.~\ref{fig-comparison-PIR}. 
The tradeoff curve is the same as that in Fig.~\ref{fig-comparison-imperfect}, and thus our result can also be interpreted as a generalization of Shannon's cipher system to multiple users with a privacy requirement.

%It is easy to verify that $H(S)$ and $I(W_{\bar{k}};A_{1:N}^{[k]}Q_{1:N}^{[k]}\bF)$ at the two corner points reduce to that of the classical PIR and SPIR, respectively. 

%%=============================================================================================
%\section{Essential Proofs}\label{section-proof}
%%=============================================================================================
\section{Proof of \Cref{thm-tradeoff-strong}}\label{section-strong-proof}
We prove \Cref{thm-tradeoff-strong} by showing the achievability, converse, and the minimum key size in the following. 
%%=============================================================================================
\subsection{Achievability}\label{section-strong-achievability}
Following the analysis of Fig.~\ref{fig-tradeoff-strong}, we only need to prove the theorem for $s\in[0,s_t]$. 
Consider a message length of $L=N-1$. 
In addition to the query generation in \eqref{query-generation}, now the random key $\bF$ generates one more indicator bit $F_0\in\{0,1\}$, 
according to probability $P(F_0=0)=\frac{N^{K-1}(N-1)s}{N^{K-1}-1}\in[0,1]$ and $P(F_0=1)=1-\frac{N^{K-1}(N-1)s}{N^{K-1}-1}$. 
Note that the probability of $F_0$ is chosen according to the coding strategy so that the information leakage is within the constraint $s$. 

For $x,y\in[1:N]$, it is useful to define the operation $(x+y)_N$ by 
\begin{equation}
(x+y)_N=
\begin{cases}
x+y,& \text{if }x+y\leq N  \\
x+y-N,& \text{if }x+y>N.
\end{cases} \label{def-addition}
\end{equation}
And similarly, 
\begin{equation}
(x-y)_N=
\begin{cases}
x-y,& \text{if }x-y>0  \\
x-y+N,& \text{if }x-y\leq 0.
\end{cases}  \label{def-subtraction}
\end{equation}
For random key $\bF=(F_0,F_1,F_2,\cdots,F_{K-1})\in\{0,1\}\times[1:N]^{K-1}$, let $F^*\triangleq \left(\sum_{i=1}^{K-1}F_i\right)_N$. 
Then the query for the $n$-th database is generated as 
\begin{align}
&Q_n^{[k]}=\left(F_0,F_1,\cdots,F_{k-1},(n-F^*)_N,F_k,\cdots,F_{K-1}\right),  \nonumber \\
&\hspace{4cm}n\in[1:N],k\in[1:K]. \label{strong-code-query}
\end{align}
Since the query is a length-$(K+1)$ vector, we can denote it by $Q_{n,0:K}^{[k]}$. 
Upon receiving the queries, the databases generate the answers using the TSC code for $F_0=0$, and SPIR code for $F_0=1$. 
Specifically, the answer is 
\begin{equation}
A_n^{[k]}=
\begin{cases}
W_{1,Q_{n,1}^{[k]}}\oplus \cdots \oplus W_{K,Q_{n,K}^{[k]}},&\hspace{-0.2cm}\text{if }Q_{n,0}^{[k]}=0\\
W_{1,Q_{n,1}^{[k]}}\oplus \cdots \oplus W_{K,Q_{n,K}^{[k]}}\oplus S,&\hspace{-0.2cm}\text{if }Q_{n,0}^{[k]}=1,
\end{cases}
\end{equation}
where $S$ is the common randomness shared among all databases. 
The retrieval and privacy requirements are easily seen from the TSC and SPIR codes. 
Then we can check the code performances as follows. 
\begin{enumerate}
	\item Information leakage: in view of the performance of TSC code in \eqref{TSC-performance} and \eqref{TSC-total-leakage}, 
	the overall normalized total leakage is 
	\begin{align}
	&\frac{1}{L}I(W_{\bar{k}};Q_{1:N}^{[k]},A_{1:N}^{[k]})  \nonumber \\
	&=\frac{N^{K-1}(N-1)s}{N^{K-1}-1}\cdot\frac{1}{N-1}\left(1-\frac{1}{N^{K-1}}\right)  \nonumber \\
	&\quad +\left[1-\frac{N^{K-1}(N-1)s}{N^{K-1}-1}\right]\cdot 0   \\
	&=s,
	\end{align}
	which is the total leakage constraint. 
	
	\item Download cost: by \eqref{TSC-performance} and \eqref{SPIR-performance}, we have 
	\begin{align}
	D&=\frac{N^{K-1}(N-1)s}{N^{K-1}-1}\cdot\frac{N^K-1}{N^{K-1}}  \nonumber \\
	&\qquad +\left[1-\frac{N^{K-1}(N-1)s}{N^{K-1}-1}\right]\cdot N  \\
	&=N-\frac{N-1}{N^{K-1}-1}s  \\
	&=L\cdot\left(\frac{N}{N-1}-\frac{1}{N^{K-1}-1}s\right),  \label{code-strong-TSC-download}
	\end{align}
	which is the minimum download cost in \eqref{D-min-strong}. 
	
	\item Amount of common randomness: since common randomness is only used in the SPIR code, we derive from \eqref{SPIR-performance} that  
	\begin{align}
	\rho=\frac{1-\frac{N^{K-1}(N-1)s}{N^{K-1}-1}}{L}  =\frac{1}{N-1}-\frac{N^{K-1}}{N^{K-1}-1} s,  \label{code-strong-TSC-keysize}
	\end{align}
	which is equal to $\rho_{\min}^{\text{s}}$ in \eqref{min-key-size-strong} of \Cref{thm-tradeoff-strong}. 
\end{enumerate}
This proves the achievability of \Cref{thm-tradeoff-strong} for $s\in[0,s_t]$. 
\begin{remark}
	The code is obtained by combining TSC code and SPIR code with certain probabilities, 
	which is accomplished by sending one more query bit as an indicator. 
	This is different from time-sharing which uses sub-codes at a fraction of time. 
	Our method i) takes advantage of the fact that the two codes have the same message size, 
	and the message size of the combined code remains the same, which is $N-1$ here; 
	ii) achieves the desired leakage and download cost in expectation, while time-sharing achieves the same information leakage and download cost in a deterministic manner; 
	iii) outperforms the time-sharing approach in the sense that time-sharing may require a large message size, 
	which limits the applicability of the code, while the proposed scheme always uses a message size of $N-1$. 
\end{remark}

%%===========================================================
\subsection{Converse}\label{section-strong-converse}
Similar to the converse proof in \cite{Sun-Jafar-PIR-capacity-2017IT}, we first lower-bound the entropy of answers and messages iteratively by the message length $L$. 
Then the download cost $D$ and information leakage~$s$ is proved to be greater than or equal to this lower bound, which gives the relation of $D$, $s$, and $L$. 
The converse in \cite{Sun-Jafar-PIR-capacity-2017IT} lower-bounds $D$ by $L$, while the converse proof here needs to lower-bound $D$ in terms of both $L$ and~$s$. 
In a sense, this is a more general version of the proof in \cite{Sun-Jafar-PIR-capacity-2017IT}. 

We first present an iterative lemma that will be used later. 
\begin{lemma}\label{lemma-strong-iteration}
	For all $k\in[1:K]$,
	\begin{equation*}
	NH(A_{1:N}^{[k]},W_{1:k}|\bF)\geq H(A_{1:N}^{[k+1]},W_{1:k+1}|\bF)+k(N-1)L.
	\end{equation*}
\end{lemma}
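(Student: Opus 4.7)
The plan is to follow the Sun--Jafar style iteration: convert the $N$-fold left-hand side into per-database entropies by entropy monotonicity, rotate the desired-message index from $k$ to $k+1$ using the user-privacy symmetry, reassemble the pieces via subadditivity, and close with the decoding constraint.

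First I would strip off the message terms using independence of $\bF$ from $W_{1:K}$ together with $H(W_j)=L$, which reduces the target inequality to the equivalent statement
\[
NH(A_{1:N}^{[k]}|W_{1:k},\bF) \;\geq\; L + H(A_{1:N}^{[k+1]}|W_{1:k+1},\bF).
\]
I would then chain three elementary entropy steps. (i) For each $n$, $H(A_{1:N}^{[k]}|W_{1:k},\bF) \geq H(A_n^{[k]}|W_{1:k},\bF)$, so summing gives $NH(A_{1:N}^{[k]}|W_{1:k},\bF) \geq \sum_n H(A_n^{[k]}|W_{1:k},\bF)$. (ii) Since $Q_n^{[k]}$ is a deterministic function of $\bF$ and the residual randomness $(W_{k+1:K},S)$ in $A_n^{[k]}$ is independent of $\bF$, one has $H(A_n^{[k]}|W_{1:k},\bF)=H(A_n^{[k]}|W_{1:k},Q_n^{[k]})$; the privacy identity \eqref{privacy-identical-distribute} then forces $H(A_n^{[k]}|W_{1:k},Q_n^{[k]})=H(A_n^{[k+1]}|W_{1:k},Q_n^{[k+1]})$, and the analogous collapse applied at index $k+1$ yields $H(A_n^{[k]}|W_{1:k},\bF)=H(A_n^{[k+1]}|W_{1:k},\bF)$. (iii) Subadditivity closes the loop: $\sum_n H(A_n^{[k+1]}|W_{1:k},\bF) \geq H(A_{1:N}^{[k+1]}|W_{1:k},\bF)$. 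Concatenating (i)--(iii) gives $NH(A_{1:N}^{[k]}|W_{1:k},\bF) \geq H(A_{1:N}^{[k+1]}|W_{1:k},\bF)$.

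To peel off the promised $L$, I would invoke the decoding constraint \eqref{recovery-constraint}, which forces $H(W_{k+1}|A_{1:N}^{[k+1]},W_{1:k},\bF)=0$; a chain-rule expansion then yields
\[
H(A_{1:N}^{[k+1]}|W_{1:k},\bF)=H(W_{k+1}|W_{1:k},\bF)+H(A_{1:N}^{[k+1]}|W_{1:k+1},\bF)=L+H(A_{1:N}^{[k+1]}|W_{1:k+1},\bF),
\]
closing the argument. The only genuinely delicate step is (ii), since \eqref{privacy-identical-distribute} is asserted at the $(Q_n,A_n)$ level whereas the lemma conditions on the full key $\bF$; the ``collapse to $Q_n^{[k]}$, apply privacy, un-collapse'' maneuver is the cleanest bridge and is the main (though modest) obstacle.
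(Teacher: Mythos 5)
Your proposal is correct and follows essentially the same chain as the paper's proof of this lemma: entropy monotonicity to pass to per-database terms, the Markov chain $(A_n^{[k]},W_{1:K})\rightarrow Q_n^{[k]}\rightarrow\bF$ together with \eqref{privacy-identical-distribute} to switch the demand index, subadditivity to reassemble, and the decoding constraint \eqref{recovery-constraint} to peel off the extra $L$. The only difference is cosmetic: you subtract out $H(W_{1:k}|\bF)$ up front and work with conditional entropies, whereas the paper carries the joint entropies through and extracts $(N-1)H(W_{1:k})$ at the end.
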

\begin{proof}
	See Appendix \ref{lemma-strong-iteration-proof}. 
\end{proof}
Applying \Cref{lemma-strong-iteration} repeatedly for $k=1,2,\cdots,K-1$, we have 
\begin{align}
&N^{K-1}H(A^{[1]}_{1:N}W_1|\bF)  \nonumber \\
&\geq N^{K-2}H(A^{[2]}_{1:N}W_{1:2}|\bF)+N^{K-2}(N-1)L   \label{converse-s-iteration00} \\
&\geq N^{K-3}H(A^{[3]}_{1:N}W_{1:3}|\bF)  \nonumber \\
&\qquad +N^{K-2}(N-1)L+2N^{K-3}(N-1)L  \\
&\qquad \vdots \\
&\geq H(A^{[K]}_{1:N}W_{1:K}|\bF)+\left(\sum_{i=1}^{K-1}iN^{K-1-i}\right)(N-1)L  \\
&= H(A^{[K]}_{1:N}W_{1:K}|\bF)+\left(\sum_{i=1}^{K-1}N^i-(K-1)\right)L. \label{converse-s-iteration0}
\end{align}
By symmetry, for any $k,k'\in[1:K]$ and $k\neq k'$, we have 
\begin{align}
N^{K-1}H(A^{[k]}_{1:N}W_k|\bF)&\geq H(A^{[k']}_{1:N}W_{1:K}|\bF)  \nonumber \\
&\quad +\left(\sum_{i=1}^{K-1}N^i-(K-1)\right)L. \label{converse-s-iteration1}
\end{align}
To obtain a lower bound on download cost $D=\log_{|\cX|}|\cY|\sum_{n=1}^N\bE(\ell_n)$ (c.f. \eqref{def-D}), consider 
\begin{align}
D&=\log_{|\cX|}|\cY|\sum_{n=1}^N\bE(\ell_n)  \nonumber \\
&=\sum_{q_{1:N}\in\cQ_{1:N}}\Pr(Q_{1:N}^{[k]}=q_{1:N})\sum_{n=1}^N\ell_n \log_{|\cX|}|\cY|    \label{converse-download0}  \\
&\geq \sum_{q_{1:N}\in\cQ_{1:N}}\Pr(Q_{1:N}^{[k]}=q_{1:N})\sum_{n=1}^NH(A_n^{(q_n)})    \\
&\geq \sum_{q_{1:N}\in\cQ_{1:N}}\Pr(Q_{1:N}^{[k]}=q_{1:N})H(A_1^{(q_1)},\cdots,A_N^{(q_N)})    \\
&=\sum_{q_{1:N}\in\cQ_{1:N}}\!\!\!\Pr(Q_{1:N}^{[k]}=q_{1:N})H(A_{1:N}^{[k]}|Q_{1:N}^{[k]}=q_{1:N})    \\
&=H(A_{1:N}^{[k]}|\bF).  \label{converse-download}
\end{align}
Then we derive the following, 
\begin{align}
&\left(N^{K-1}-1\right)D+I(W_{\bar{k}}; A^{[k]}_{1:N} Q^{[k]}_{1:N} \bF)  \nonumber \\
&\geq \left(N^{K-1}-1\right)H(A_{1:N}^{[k]}|\bF)+I(W_{\bar{k}};A^{[k]}_{1:N}|\bF)   \label{converse-s-D-1}  \\
&=\left(N^{K-1}-1\right)H(A_{1:N}^{[k]}|\bF)  \nonumber \\
&\quad +\left[H(W_{\bar{k}})+H(A^{[k]}_{1:N}W_k|\bF)-H(A^{[k]}_{1:N}W_{1:K}|\bF)\right]   \\
&=\left[(N^{K-1}-1)H(A_{1:N}^{[k]}|\bF)+H(A^{[k]}_{1:N}W_k|\bF)\right] \nonumber \\
&\quad +(K-1)L-H(A^{[k]}_{1:N}W_{1:K}|\bF)  \\
&=N^{K-1}H(A^{[k]}_{1:N}W_k|\bF)+(K-1)L-H(A^{[k]}_{1:N}W_{1:K}|\bF)   \label{converse-s-D-4} \\
&\geq \left[H(A^{[k']}_{1:N}W_{1:K}|\bF)+\left(\sum_{i=1}^{K-1}N^i-(K-1)\right)L\right] \nonumber \\
&\qquad +(K-1)L-H(A^{[k]}_{1:N}W_{1:K}|\bF)  \label{converse-s-D-5} \\
&=\left(\sum_{i=1}^{K-1}N^i\right)L+\left[H(A^{[k']}_{1:N}W_{1:K}|\bF)-H(A^{[k]}_{1:N}W_{1:K}|\bF)\right],  \label{converse-s-D}
\end{align}
where \eqref{converse-s-D-4} follows from 
\begin{equation}
H(A_{1:N}^{[k]}|\bF)=H(A^{[k]}_{1:N} Q^{[k]}_{1:N}W_k|\bF)=H(A^{[k]}_{1:N}W_k|\bF),
\end{equation}
and \eqref{converse-s-D-5} follows from \eqref{converse-s-iteration1} for any $k'\in[1:K]$ and $k'\neq k$. 
Similarly, by switching $k$ and $k'$ in \eqref{converse-s-D}, we obtain 
\begin{align}
&\left(N^{K-1}-1\right)D+I(W_{\bar{k'}}; A^{[k']}_{1:N} Q^{[k']}_{1:N} \bF) \nonumber \\
&\geq\left(\sum_{i=1}^{K-1}N^i\right)L+\left[H(A^{[k]}_{1:N}W_{1:K}|\bF)-H(A^{[k']}_{1:N}W_{1:K}|\bF)\right].  \label{converse-s-D'}
\end{align}
Summing up \eqref{converse-s-D} and \eqref{converse-s-D'}, we have 
\begin{align}
&2\left(N^{K-1}-1\right)D+I(W_{\bar{k}}; A^{[k]}_{1:N} \bF)+I(W_{\bar{k'}}; A^{[k']}_{1:N} \bF) \nonumber \\
&\geq 2\left(\sum_{i=1}^{K-1}N^i\right)L.  
\end{align}
Thus the lower bound on $D$ is 
\begin{align}
D&\geq \frac{N}{N-1}L \nonumber \\
&~~-\frac{1}{N^{K-1}-1}\frac{I(W_{\bar{k}}; A^{[k]}_{1:N}\bF)+I(W_{\bar{k'}}; A^{[k']}_{1:N} \bF)}{2}  \\
&\geq L\cdot\left(\frac{N}{N-1}-\frac{1}{N^{K-1}-1}\cdot s\right),  \label{converse-final-bound-D}
\end{align}
which matches the value of $D_{\min}$ for $s\in[0,s_t]$ in \eqref{D-min-strong} of \Cref{thm-tradeoff-strong}. 
For $s>s_t$, the converse follows from that of classical PIR. 
This completes the converse proof.

\begin{remark}
	The proof steps for \Cref{lemma-strong-iteration} are similar to those for Lemma 6 in~\cite{Sun-Jafar-PIR-capacity-2017IT}, 
	and \eqref{converse-s-iteration00}-\eqref{converse-s-iteration1} are similar to (62)-(67) in~\cite{Sun-Jafar-PIR-capacity-2017IT}, which complete the converse proof therein by connecting $D$ to $L$. 
%	The lower bound of $D$ in \eqref{converse-download0}-\eqref{converse-download} can be found in~\cite{Tian-Sun-Chen-PIR-IT19}. 
	The key difference between the proof here and that in \cite{Sun-Jafar-PIR-capacity-2017IT} is in steps \eqref{converse-s-D-1}-\eqref{converse-final-bound-D}, which connects $D$ to $L$ and $s$ jointly. 
\end{remark}

%%==========================================================
\subsection{Amount of common randomness}\label{section-strong-key-size}
To obtain the lower bound on the amount of common randomness, we first consider the following,
\begin{align}
&N^{K-1}I(W_{\bar{k}};A_{1:N}^{[k]}Q_{1:N}^{[k]}\bF)+(N^{K-1}-1)H(S)   \nonumber \\
&=N^{K-1}\left[H(W_{\bar{k}})+H(A^{[k]}_{1:N}W_k|\bF)-H(A^{[k]}_{1:N}W_{1:K}|\bF)\right] \nonumber \\
&\qquad +(N^{K-1}-1)H(S) \\
&=N^{K-1}H(W_{\bar{k}})+\!\!\left[N^{K-1}H(A^{[k]}_{1:N}W_k|\bF)\!-\!H(A^{[k]}_{1:N}W_{1:K}|\bF)\right]  \nonumber \\
&\qquad +(N^{K-1}-1)\left[H(S)-H(A^{[k]}_{1:N}W_{1:K}|\bF)\right] \\
&\geq N^{K-1}(K-1)L-(N^{K-1}-1)H(W_{1:K})  \nonumber \\
&\qquad +\left[N^{K-1}H(A^{[k]}_{1:N}W_k|\bF)-H(A^{[k]}_{1:N}W_{1:K}|\bF)\right] \label{converse-s-key-size-2}  \\
&\geq (K-N^{K-1})L+\left(\sum_{i=1}^{K-1}N^i-(K-1)\right)L  \nonumber \\ 
&\qquad +\left[H(A^{[k']}_{1:N}W_{1:K}|\bF)-H(A^{[k]}_{1:N}W_{1:K}|\bF)\right]  \label{converse-s-key-size-3}  \\
&=\left(\sum_{i=0}^{K-2}N^i\right)L+\left[H(A^{[k']}_{1:N}W_{1:K}|\bF)-H(A^{[k]}_{1:N}W_{1:K}|\bF)\right],    \label{converse-s-key-size-5}
\end{align}
where \eqref{converse-s-key-size-2} follows from 
\begin{align}
&H(A^{[k]}_{1:N}W_{1:K}|\bF)-H(W_{1:K})=H(A^{[k]}_{1:N}|W_{1:K}\bF)  \\
&=H(A^{[k]}_{1:N}|W_{1:K}Q^{[k]}_{1:N}\bF)-H(A^{[k]}_{1:N}|W_{1:K}Q^{[k]}_{1:N}\bF S)  \\
&=I(S;A^{[k]}_{1:N}|W_{1:K}Q^{[k]}_{1:N}\bF). 
\end{align}
and 
\begin{equation}
H(S)-I(S;A^{[k]}_{1:N}|W_{1:K}Q^{[k]}_{1:N}\bF)\geq 0, 
\end{equation}
and \eqref{converse-s-key-size-3} follows from \eqref{converse-s-iteration1} for any $k'\in[1:K]$ and $k'\neq k$. 
Similarly, by switching $k$ and $k'$ in \eqref{converse-s-key-size-5}, we obtain 
\begin{align}
&N^{K-1}I(W_{\bar{k'}};A_{1:N}^{[k']}Q_{1:N}^{[k']}\bF)+(N^{K-1}-1)H(S)   \nonumber \\
&\geq \left(\sum_{i=0}^{K-2}N^i\right)L+\left[H(A^{[k]}_{1:N}W_{1:K}|\bF)-H(A^{[k']}_{1:N}W_{1:K}|\bF)\right].    \label{converse-s-key-size-5'}
\end{align}
Summing up \eqref{converse-s-key-size-5} and \eqref{converse-s-key-size-5'}, we have  
{\small\begin{align}
	&2(N^{K-1}-1)H(S)+N^{K-1}\left[I(W_{\bar{k}};A_{1:N}^{[k]}\bF)+I(W_{\bar{k'}};A_{1:N}^{[k']}\bF)\right]   \nonumber \\
	&\geq 2\left(\sum_{i=0}^{K-2}N^i\right)L.
	\end{align}}
Then we have 
\begin{align}
\rho&=\frac{H(S)}{L} \geq \frac{\sum_{i=0}^{K-2}N^i}{N^{K-1}-1}L  \nonumber \\
&\quad -\frac{N^{K-1}}{N^{K-1}-1}\frac{I(W_{\bar{k}};A_{1:N}^{[k]}\bF)+I(W_{\bar{k'}};A_{1:N}^{[k']}\bF)}{2}   \\
&\geq \frac{1}{N-1}-\frac{N^{K-1}}{N^{K-1}-1}\cdot s,
\end{align}
which is exactly the lower bound in \eqref{min-key-size-strong} of \Cref{thm-tradeoff-strong}.

%%=========================================================
\section{Proof of \Cref{thm-tradeoff-weak}}\label{section-weak-proof}
%We prove \Cref{thm-capacity-weak} by showing the achievability and the converse in the following. 
%The time sharing between optimal PIR code and WS-PIR code will be optimal in terms of achieving the tradeoff curve in Fig.~\ref{fig-tradeoff-weak}.
%Thus, before proving \Cref{thm-tradeoff-weak}, we first propose an optimal WS-PIR code that achieves the minimum amount of common randomness. 
%The code is also the achievability proof for \Cref{lemma-WS-PIR-capacity}. 
%%=========================================================
\subsection{Optimal code for WS-PIR}\label{section-WS-PIR-code}
Even though the SPIR code in \cite{Sun-Jafar-SPIR-19IT} achieves the capacity of WS-PIR,
it does not always achieve the minimum amount of common randomness for WS-PIR. 
From \eqref{WS-PIR-key-size} in \Cref{lemma-WS-PIR-capacity}, we see that for $K=2$, the minimum amount of common randomness of WS-PIR is equal to that of SPIR. 
This is because for $K=2$, weak security is equivalent to strong security which means that WS-PIR is equivalent to SPIR. 
Thus, the SPIR code in \cite{Sun-Jafar-SPIR-19IT} also achieves the minimum amount of common randomness of WS-PIR. 

For $K\geq 3$, \eqref{WS-PIR-key-size} indicates that the databases do not need to share common randomness. 
We consider the following three cases: 
\begin{enumerate}[i)]
	\item $|\cX|\geq 3$ and $N\geq 2$;
	\item $|\cX|=2$ and $N\geq 3$;
	\item $|\cX|=N=2$. 
\end{enumerate}
Next, we propose an optimal code for WS-PIR that uses the sum of all message symbols as encryption key shared by databases and no extra randomness is needed. 
%The code is designed separately for two cases i) $N\geq 3$; ii) $N=2$. %Consider $K\geq 3$ and $N\geq 3$. 
The code design is simply to modify TSC code by adding the shared encryption key (sum of all message symbols) to each of the answers. 

\textit{Case i):} The code has a message length of $L=N-1$. 
Specifically, by appending dummy variables $W_{k,N}=0$, the message $W_k$ can be written as 
\begin{equation}
W_k=(W_{k,1},W_{k,2}\cdots,W_{k,N-1},W_{k,N}). 
\end{equation}
Let the random key $\bF$ of the user be chosen from $[1:N]^{K-1}$ which gives 
\begin{equation}
\bF=(F_1,F_2,\cdots,F_{K-1}).
\end{equation}
%where $F_k\in\cX$ for $k=1,2,\cdots,K-1$. 
For $x,y\in[1:N]$, the operations $(x+y)_N$ and $(x-y)_N$ are defined by \eqref{def-addition} and \eqref{def-subtraction}.
Let $F^*\triangleq \left(\sum_{i=1}^{K-1}F_i\right)_N$. 
For $k\in[1:K]$ and $n\in[1:N]$, the query is a deterministic function of the random key $\bF$, defined as 
\begin{equation}
Q_n^{[k]}=\big(F_1,F_2,\cdots,F_{k-1},(n-F^*)_N,F_k,\cdots,F_{K-1}\big).  \label{WS-PIR-code-query}
\end{equation}
The sum of all message symbols is denoted by $S$, which is 
\begin{equation}
S=\sum_{k=1}^{K}\big(W_{k,1}\oplus W_{k,2}\oplus\cdots\oplus W_{k,N-1}\big).  \label{WS-PIR-code-encryption-key}
\end{equation}
Upon receiving the query $Q_n^{[k]}$, the $n$-th database generates an answer $A_n^{[k]}$ using $Q_n^{[k]}$ as linear combination indexes of all the message symbols. 
We further add the encryption key $S$ to each answer and obtain that 
\begin{align}
A_n^{[k]}&=W_{1,F_1}\oplus\cdots \oplus W_{k-1,F_{k-1}}\oplus W_{k,(n-F^*)_N}  \nonumber \\
&\qquad \qquad \oplus W_{k+1,F_k}\oplus\cdots\oplus W_{K,F_{K-1}}\oplus S.   \label{WS-PIR-code-ans-allsymbols}
\end{align}
For simplicity, we define 
\begin{equation}
B\!=\! W_{1,F_1}\oplus \cdots \oplus\! W_{k-1,F_{k-1}}\!\oplus\! W_{k+1,F_k}\!\oplus\cdots\oplus W_{K,F_{K-1}},  \label{B-def}
\end{equation}
where $\oplus$ denotes the addition operation in finite group $\cX$ (this is an abuse of notation without ambiguity). 
Substituting \eqref{B-def} into \eqref{WS-PIR-code-ans-allsymbols}, we have 
\begin{equation}
A_n^{[k]}=W_{k,(n-F^*)_N}\oplus B\oplus S.    \label{WS-PIR-code-ans}
\end{equation}
The user receives all the answers from databases, i.e., $A_1^{[k]},A_2^{[k]},\cdots,A_N^{[k]}$.
Then we see that 
\begin{align}
W_{k,(n-F^*)_N}=A_n^{[k]}\ominus A_{F^*}^{[k]}=A_n^{[k]}\ominus (B\oplus S),  \label{WS-PIR-code-recover}
\end{align}
where $\ominus$ is the subtraction operation in the Abelian group $\cX$. 
Then the message $W_k$ can be recovered by ranging $n$ from $1$ to $N$. 

Since $\bF$ is chosen uniformly from $\{0,1\}^{K-1}$ and all the queries are deterministic functions of $\bF$, 
we see that $Q_n^{[k]}$ is chosen uniformly from the query set for any $k\in[1:K]$ and $n\in[1:N]$. 
Thus $Q_n^{[k]}$ provides no information about the message index $k$, which ensures the privacy. 

The weak security (individual leakage) can be seen as follows. 
%	In addition to $W_k$, the user can recover from databases only $B\oplus S$. 
The coefficient of each message symbol $W_{k,i}~(k\in[1:K],i\in[1:N-1])$ in the expression of $B\oplus S$ can only be $1$ or 2. 
Because of the assumption that $|\cX|\geq 3$, none of the message symbols will vanish. 
Since $K\geq 3$, there is at least one $k''\in[1:K]$ such that $k''\neq k,k'$. 
Then the message symbols of $W_{k''}$ randomizes the sum of symbols from $W_k$ and $W_{k'}$, and thus we obtain that 
\begin{equation}
I(B\oplus S;W_{k'}W_k)=0,  \label{WS-PIR-code-security-pre}
\end{equation}
which implies  
\begin{equation}
I(B\oplus S;W_k)=I(B\oplus S;W_{k'}|W_k)=0. 
\end{equation}
Thus, we have 
\begin{equation}
I(W_{k'};B\oplus S,W_k)=I(W_{k'};W_k)+I(W_{k'};B\oplus S|W_k)=0.  \label{WS-PIR-code-security-0}
\end{equation}
To see the security, we consider the following, 
\begin{align}
I(W_{k'};A_{1:N}^{[k]} Q_{1:N}^{[k]} \bF) &=I(W_{k'};A_{1:N}^{[k]} Q_{1:N}^{[k]} W_k \bF)   \label{WS-PIR-code-security-1} \\
&=I(W_{k'};A_{1:N}^{[k]} W_k\bF)   \\
&=I(W_{k'};B\oplus S,W_k|\bF)   \label{WS-PIR-code-security-2}  \\
%&=\sum_{\mathbf{f}}p(\mathbf{f})I(W_{k'};B\oplus S,W_k|\bF=\mathbf{f})  \nonumber \\
&= 0,  \label{WS-PIR-code-security-4}
\end{align}
where \eqref{WS-PIR-code-security-1} follows from the recovery in \eqref{WS-PIR-code-recover}, 
\eqref{WS-PIR-code-security-2} follows from the expression of answer in \eqref{WS-PIR-code-ans} and $I(W_{k'};\bF)=0$, 
and the last equality follows from \eqref{WS-PIR-code-security-0} and $B\oplus S$ is a function of messages $W_{1:K}$ and the random key $\bF$.
%\begin{align}
%I(W_{k'};A_{1:N}^{[k]} Q_{1:N}^{[k]} \bF) &=I(W_{k'};A_{1:N}^{[k]} Q_{1:N}^{[k]} W_k \bF)   \label{WS-PIR-code-security-1} \\
%&=I(W_{k'};A_{1:N}^{[k]} W_k\bF)   \\
%&=I(W_{k'};B\oplus S, W_k\bF)   \label{WS-PIR-code-security-2} \\
%&=I(W_{k'};B\oplus S|W_k\bF)   \label{WS-PIR-code-security-3} \\
%&=0,  \label{WS-PIR-code-security-4}
%\end{align}
%where \eqref{WS-PIR-code-security-1} follows from the recovery in \eqref{WS-PIR-code-recover}, 
%\eqref{WS-PIR-code-security-2} follows from the expression of answer in \eqref{WS-PIR-code-ans}, 
%\eqref{WS-PIR-code-security-3} follows from the independence of $W_{k'}$ and $(W_k,\bF)$, 
%and the last step follows from \eqref{WS-PIR-code-security-0} and the independence of $\bF$ and $(W_k,W_{k'})$. 

Since each answer consists of one symbol, the download cost is 
\begin{equation}
D=N. \label{D-WS-PIR}
\end{equation}
Since $L=N-1$, the rate of the code is simply $R=\frac{N-1}{N}=1-\frac{1}{N}$ that matches the capacity in \eqref{WS-PIR-capacity} of \Cref{lemma-WS-PIR-capacity}. 
Lastly, the databases do not share common randomness in addition to the messages themselves. 

\textit{Case ii):} It is easy to verify that the code also works for $|\cX|=2$ and $N\geq 3$. 
This can be seen intuitively as follows. Since $L=N-1\geq 2$, each message consists of at least two symbols. 
The encryption key $S$ consists of all the message symbols. 
Since $W_{k,(n-F^*)_N}\oplus B$ contains at most one symbol from each message, 
it can cancel out at most one symbol from $S$. 
Then the answer $A_n^{[k]}=W_{k,(n-F^*)_N}\oplus B\oplus S$ (c.f. \eqref{WS-PIR-code-ans}) contains at least one symbol from each message. 
Since $K\geq 3$, \eqref{WS-PIR-code-security-pre} and the subsequent equations also hold for the current case. 
This ensures the security of each individual message. 

\textit{Case iii):} For the case that $N=2$ and $|\cX|=2$, let $L=2(N-1)=2$. 
We can divide each message into two sub-messages, i.e., $W_k=(W_k^{(1)},W_k^{(2)})$ for $k\in[1:K]$. 
Then we apply the code in Case i) to each sub-message, where we choose the same encryption key as $S=\sum_{k=1}^KW_k^{(1)}\oplus W_k^{(2)}$. 
The parameters  $\bF^{(1)},\bF^{(2)}\in\{1,2\}^{K-1}$ are chosen to be dual of each other, i.e., $\bF^{(1)}$ and $\bF^{(2)}$ differ in every position. 
The private retrieval requirement is easy. 
We can see the weak security from the following example of $K=3$. 
Let $k=2$ and the random key to be chosen as $\bF^{(1)}=(2,1)$ and $\bF^{(2)}=(1,2)$, then the queries are
\begin{equation}
q_1^{(1)}=(2,1,1),~q_2^{(1)}=(2,2,1)
\end{equation}
and
\begin{equation}
q_1^{(2)}=(1,1,2),~q_2^{(2)}=(1,2,2).
\end{equation}
The parameters $B^{(1)}\oplus S$ and $B^{(2)}\oplus S$ of the two sub-codes are thus 
\begin{align}
\!B^{(1)}\!\oplus\! S&\!=\! W_{1,2}^{(1)}\oplus W_{3,1}^{(1)}\oplus S\!=\! W_1^{(1)}\oplus W_1^{(2)}\oplus W_3^{(2)},   \\
\!B^{(2)}\!\oplus\! S&\!=\! W_{1,1}^{(2)}\oplus W_{3,2}^{(2)}\oplus S\!=\! W_1^{(1)}\oplus W_3^{(1)}\oplus W_3^{(2)}.
\end{align}
We can easily see that $(B^{(1)}\oplus S, B^{(2)}\oplus S)$ provides no information about either $W_1$ or $W_3$. 
Then using similar arguments as \eqref{WS-PIR-code-security-pre}-\eqref{WS-PIR-code-security-4}, we can obtain the weak security.

\begin{remark}
	The SPIR code in \cite{Sun-Jafar-SPIR-19IT} is an optimal code for WS-PIR that achieves capacity. 
	%	Additionally, the message length and download cost here are both equal to that of the SPIR code. 
	%	In fact, simply replacing the common randomness $S$ in \cite{Sun-Jafar-SPIR-19IT} by the sum of all message symbols 
	%	will also be an optimal WS-PIR code that achieves minimum amount of common randomness. 
	However, it is important to point out that the method of using the sum of message symbols as encryption keys does not work for SPIR, since it leaks partial of the total information of the whole message set $W_{1:K}$. 
\end{remark}

%%=========================================================
\subsection{Achievability of Pareto Optimal Points}\label{section-weak-achievability}
Following the analysis of Fig.~\ref{fig-tradeoff-weak} after \Cref{lemma-WS-PIR-capacity}, we only need to prove the theorem for $w\in[0,w_t]$. 
Since for $K=2$, the WS-PIR code reduces to SPIR code and the achievability is proved in \Cref{section-strong-achievability}, we consider only $K\geq 3$ here. 

Similar to \Cref{section-strong-achievability}, consider a message length of $L=N-1$. 
The random key $\bF$ generates one more indicator bit $F_0\in\{0,1\}$ according to probability $P(F_0=0)=wN^{K-1}\in[0,1]$ and $P(F_0=1)=1-wN^{K-1}$, which is chosen to control the individual leakage within the constraint $w$. 
The queries and answers are generated similarly as that in \Cref{section-strong-achievability}. 
The only difference is to replace $S$ by the encryption key using \eqref{WS-PIR-code-encryption-key}. 
We then check the following. 
\begin{enumerate}
	\item Information leakage: in view of the performance of TSC code in \eqref{TSC-performance} and \eqref{TSC-individual-leakage}, 
	the overall normalized total leakage is 
	\begin{align}
	&\frac{1}{L}I(W_{\bar{k}};Q_{1:N}^{[k]},A_{1:N}^{[k]},\bF)  \nonumber \\
	&=wN^{K-1}\cdot\frac{1}{N^{K-1}}+(1-wN^{K-1})\cdot 0   \\
	&=w,
	\end{align}
	which is the individual leakage constraint. 
	
	\item Download cost: by \eqref{TSC-performance} and \eqref{D-WS-PIR}, we have 
	\begin{align}
	D&=wN^{K-1}\cdot\frac{N^K-1}{N^{K-1}}+(1-wN^{K-1})\cdot N  \\
	&=N-w  \\
	&=L\cdot\left(\frac{N}{N-1}-\frac{1}{N-1}w\right),  \label{code-weak-TSC-download}
	\end{align}
	which is the minimum download cost in \eqref{D-min-weak} of \Cref{thm-tradeoff-weak}.
	
	\item Amount of common randomness: i) for $K=2$, the key size in \eqref{code-strong-TSC-keysize} reduces to $\frac{1}{N-1}-\frac{N}{N-1}\cdot w$; 
	ii) for $K\geq 3$, there is no common randomness needed and $\rho=0$. 
	Thus the amount of common randomness of the code is equal to $\rho_{\min}^{\text{s}}$ in \eqref{min-key-size-strong} of \Cref{thm-tradeoff-strong}. 
	
\end{enumerate}
This proves the achievability of \Cref{thm-tradeoff-weak} for $w\in[0,w_t]$. 

%\begin{remark}
%	It is worth noting that the time sharing between PIR code and SPIR code in \cite{Sun-Jafar-SPIR-19IT} is also optimal that achieves the whole tradeoff curve in Fig.~\ref{fig-tradeoff-weak}, 
%	since SPIR code achieves the capacity of WS-PIR. 
%	However, the code needs common randomness to be shared among databases, which is less efficient. 
%\end{remark}
\begin{remark}
	We notice that the linear combination coefficients as well as the other code performances here are the same as that of the linear combination between TSC and SPIR codes in \Cref{section-strong-achievability}. The reason is $C_{\text{WS-PIR}}=C_{\text{PIR}}$ and the WS-PIR code in \Cref{section-WS-PIR-code} has the same message length and download cost as SPIR code in \cite{Sun-Jafar-SPIR-19IT}.  
\end{remark}

\begin{remark}
	The achievability proof of \Cref{thm-tradeoff-strong} in \Cref{section-strong-achievability} is a probabilistic combination of two existing codes, i.e., the TSC code in \cite{Tian-Sun-Chen-PIR-IT19} and SPIR code in \cite{Sun-Jafar-SPIR-19IT}.
	However, the achievability proof of \Cref{thm-tradeoff-weak} in \Cref{section-weak-achievability} uses a new WS-PIR code proposed in \Cref{section-WS-PIR-code} that does not require common randomness. 
\end{remark}

%%=============================================================================================
\subsection{Converse}\label{section-weak-converse}
The converse proof is very similar to that in \Cref{section-strong-converse} except that the iteration is replaced by a single relation between the entropy of answers for desired and non-desired messages.
		
For any $k,k'\in [1:K]$ and $k \not= k'$, we first derive the following which is similar to the proof of \Cref{lemma-strong-iteration}.
\begin{align}
&NH(A_{1:N}^{[k]}W_k|\bF)\geq \sum_{n=1}^{N}H(A_n^{[k]}W_k|\bF)  \label{converse-w-iteration-1} \\
&=\sum_{n=1}^{N}H(A_n^{[k]}W_k|Q_n^{[k]}) \label{converse-w-iteration-2} \\
&=\sum_{n=1}^{N}H(A_n^{[k']}W_k|Q_n^{[k']})  \label{converse-w-iteration-3} \\
&=\sum_{n=1}^{N}H(A_n^{[k']}W_k|\bF)  \label{converse-w-iteration-4} \\
&=\sum_{n=1}^{N}H(A_n^{[k']}|W_k\bF)+NH(W_k)  \\
&\geq \sum_{n=1}^{N}H(A_n^{[k']}|W_k\bF A_{1:n-1}^{[k']})+NL  \label{converse-w-iteration-5} \\
&=H(A_{1:N}^{[k']}|W_k\bF)+NL  \\
&= H(A_{1:N}^{[k']}W_k|\bF)+(N-1)L  \\
&= H(A_{1:N}^{[k']}Q_{1:N}^{[k']}W_kW_{k'}|\bF)+(N-1)L  \\
&= H(A_{1:N}^{[k']}W_kW_{k'}|\bF)+(N-1)L, 
\end{align}
where \eqref{converse-w-iteration-2} and \eqref{converse-w-iteration-4} follow from the Markov chain $(A_n^{[k]},W_{[1:K]})\rightarrow Q_n^{[k]}\rightarrow\bF$, 
and \eqref{converse-w-iteration-3} follows from identical distribution constraint in \eqref{privacy-identical-distribute}. 
Thus, we have 
\begin{align}
NH(A_{1:N}^{[k]}W_k|\bF)\!-\!H(A_{1:N}^{[k']}W_kW_{k'}|\bF)\geq (N-1)L.  \label{converse-w-iteration}
\end{align}
To provide a lower bound on the download cost $D$ (c.f. \eqref{def-D}), we derive the following 
\begin{align}
&(N-1)D+I(W_{k'};A^{[k]}_{1:N}Q^{[k]}_{1:N} \bF)  \nonumber \\
&\geq (N-1)H(A_{1:N}^{[k]}|\bF)+I(W_{k'};A^{[k]}_{1:N}|\bF)   \label{converse-weak-proof-1} \\
&= \left[H(W_{k'})+H(A_{1:N}^{[k]}W_k|\bF)-H(A_{1:N}^{[k]}W_kW_{k'}|\bF)\right]  \nonumber \\
&\qquad + (N-1)H(A_{1:N}^{[k]}W_k|\bF)  \label{converse-weak-proof-2} \\
&=NH(A_{1:N}^{[k]}W_k|\bF)-H(A_{1:N}^{[k]}W_kW_{k'}|\bF)+L   \label{converse-weak-proof-3} \\
&\geq NL\!+\!\left[H(A_{1:N}^{[k']}W_kW_{k'}|\bF)\!-\!H(A_{1:N}^{[k]}W_kW_{k'}|\bF)\right]   \label{converse-weak-proof-4}
\end{align}
where \eqref{converse-weak-proof-1} follows from \eqref{converse-download}, 
\eqref{converse-weak-proof-2} follows from \eqref{query-generation} and \eqref{recovery-constraint}, 
and \eqref{converse-weak-proof-4} follows from \eqref{converse-w-iteration}. 
By switching $k$ and $k'$ in  \eqref{converse-weak-proof-4}, we obtain 
\begin{align}
&(N-1)D+I(W_{k};A^{[k']}_{1:N}Q^{[k']}_{1:N} \bF)  \nonumber \\
&\geq NL\!+\!\left[H(A_{1:N}^{[k]}W_kW_{k'}|\bF)\!-\!H(A_{1:N}^{[k']}W_kW_{k'}|\bF)\right].   \label{converse-weak-proof-4'}
\end{align}
Summing up \eqref{converse-weak-proof-4} and \eqref{converse-weak-proof-4'}, we have  
\begin{align*}
2(N-1)D+\left[I(W_{k'};A^{[k]}_{1:N}\bF)+I(W_{k};A^{[k']}_{1:N}\bF)\right]\geq 2NL. 
\end{align*}
Thus, the lower bound on $D$ is 
\begin{align}
D&\geq \frac{N}{N-1}L -\frac{1}{N-1}\frac{I(W_{k'};A^{[k]}_{1:N}\bF)+I(W_{k};A^{[k']}_{1:N}\bF)}{2}   \nonumber \\
&\geq L\cdot\left(\frac{N}{N-1}-\frac{1}{N-1} w\right)
\end{align}
which matches the value of $D_{\min}$ for $w\in[0,w_t]$ in \eqref{D-min-weak} of \Cref{thm-tradeoff-weak}. 
For $w>w_t$, the converse follows from that of classical PIR. 
This completes the converse proof. 

%%==================================================================
\subsection{Minimum Key Size}\label{section-weak-key-size}
For $K=2$, individual leakage is equal to total leakage, which means that PIR system with the same value of total leakage constraint $s$ and individual leakage constraint $w$ are equivalent. 
This gives the minimum amount of common randomness in \eqref{min-key-size-weak} of \Cref{thm-tradeoff-weak}. 
For $K\geq 3$, the lower bound in \eqref{min-key-size-weak} is zero, which is trivial.

\section{Conclusion}\label{section-conclusion}
In this work, we studied the PIR problem with total and individual leakage constraints, respectively. 
Our main contribution was the characterization of the tradeoffs between the minimum download cost $D_{\min}$ and the total leakage constraint $s$ and individual leakage constraint $w$.
The minimum amount of common randomness with respect to $s$ and $w$ was also characterized. 
Remarkably, for $K\geq 3$, the databases do not need to share common randomness in order to achieve the individual leakage constraint $w$. 
%It was shown that the linear combination of classical PIR code (e.g., SJ code in \cite{Sun-Jafar-PIR-capacity-2017IT} and TSC code in \cite{Tian-Sun-Chen-PIR-IT19}) and i) SPIR code in \cite{Sun-Jafar-SPIR-19IT}; ii) WS-PIR code are both optimal in terms of achieving the corresponding tradeoff curve. 
For $s\in[0,s_t]$, it was shown that the linear combination of TSC code and SPIR code is Pareto optimal (achieving the whole tradeoff curve). 
For $w\in[0,w_t]$, an optimal code that achieves the capacity of WS-PIR without shared common randomness was proposed ($K\geq 3$), 
and its linear combination with the classical PIR code was proved Pareto optimal. 
The proposed Pareto optimal codes for both individual and total leakage were proved to have a minimum message size of $N-1$.

%%==================================================================
\appendices
\section{Proof of \Cref{lemma-message-size-multiple}}\label{section-message-size-multiple-proof}
Throughout the proof, we use $L$ to denote the minimum message length for Pareto optimal uniformly decomposable codes. 
Since we have coding scheme of message length $(N-1)\log_{|\cX|}|\cY|$, the minimum message length should be $L\leq (N-1)\log_{|\cX|}|\cY|$. 
If $L= (N-1)\log_{|\cX|}|\cY|$, we are done. Otherwise, $L< (N-1)\log_{|\cX|}|\cY|$. 

In the following, we prove the lemma by contradiction and suppose that $\frac{L}{\log_{|\cX|}|\cY|}$ is not an integer. 
We only consider the individual leakage case, since by \Cref{remark-compare-same-leakage} the same results also hold for the total leakage case. 
From \eqref{D-min-weak} in \Cref{thm-tradeoff-weak}, the average number of downloaded symbols in $\cY$ for Pareto optimal codes should be 
\begin{equation}
D_{\cY}=\frac{L}{\log_{|\cX|}|\cY|}\left(\frac{N}{N-1}-\frac{w}{N-1}\right)=\frac{N-w}{N-1}\frac{L}{\log_{|\cX|}|\cY|}.  \label{min-siz-multiple-pf-Dy}
\end{equation}
Since $L< (N-1)\log_{|\cX|}|\cY|$ implies $0<\frac{N-w}{N-1}\frac{L}{\log_{|\cX|}|\cY|}-\frac{L}{\log_{|\cX|}|\cY|}<1$, 
the only possible realization of queries that have a downloaded number of symbols $d\leq D_{\cY}$ are those $q$ inducing $d=\lceil\frac{L}{\log_{|\cX|}|\cY|}\rceil$ downloaded symbols. 
It is easy to see that 
\begin{equation}
L< \lceil\frac{L}{\log_{|\cX|}|\cY|}\rceil\cdot\log_{|\cX|}|\cY|<(L+1).  \label{min-siz-multiple-pf-bound-L}
\end{equation}
Since the equality in \eqref{converse-weak-proof-1} is equivalent to equality in \eqref{converse-download}, 
we see from Appendix \ref{section-message-size-proof} that independence property \textbf{P}1 holds for both total and individual leakage cases. 
By \textbf{P}1 and uniform distribution of answer symbols, we obtain 
\begin{equation}
H(A_{1:N}^{(q)})=\lceil\frac{L}{\log_{|\cX|}|\cY|}\rceil\cdot \log_{|\cX|}|\cY|.  \label{min-siz-multiple-pf-H_A}
\end{equation}
%which implies that 
%\begin{equation}
%H(A_{1:N}^{(q)}|W_k)\geq H(A_{1:N}^{(q)})-H(W_k)>0.   \label{min-siz-multiple-pf-1}
%\end{equation}
Since the whole answer set can recover $W_k$, we have $H(W_k|A_{1:N}^{(q)})=0$, and thus 
$H(A_{1:N}^{(q)}|W_k)= H(A_{1:N}^{(q)})-H(W_k).$
Then from \eqref{min-siz-multiple-pf-bound-L} and \eqref{min-siz-multiple-pf-H_A}, we can bound $H(A_{1:N}^{(q)}|W_k)$ by 
\begin{equation}
0<H(A_{1:N}^{(q)}|W_k)<\log_{|\cX|}|\cY|.   \label{min-siz-multiple-pf-1}
\end{equation}
Next, we derive another bound for $H(A_{1:N}^{(q)}|W_k)$, which leads to a contradiction. 
By the definition of uniform code in \eqref{uniform-def-1}-\eqref{uniform-def-2}, denote the non-deterministic symbols of $A_{1:N}^{(q)}$ by $A_{n,i}^{(q)}=\varphi_{n,i}^{(q)}$. 
For $n\in[1:N]$ and $i\in[1:\ell_n]$, define a random variable by $U_{n,i}^{(q)}=A_{n,i}^{(q)}\ominus\varphi_{n,i,k}^{(q)}(W_k)$, where $\ominus$ is the subtraction in the Abelian group $\cX$ as defined after \eqref{WS-PIR-code-recover}. Then, we have 
\begin{align}
H(A_{1:N}^{(q)}|W_k)=H(U_{1,1}^{(q)},\cdots,U_{N,\ell_{N}}^{(q)}).  \label{min-siz-multiple-pf-condition-k}
\end{align}
We see that $U_{n,i}^{(q)}$ is either deterministic or uniformly distributed over~$\cY$, thus $H(U_{n,i}^{(q)})=0$ or $\log_{|\cX|}|\cY|$. 
If $H(U_{n,i}^{(q)})=0$ for all $(n,i)$, we have $H(U_{1,1}^{(q)},\cdots,U_{N,\ell_{N}}^{(q)})=0$. 
Otherwise, there is at least one pair  $(n,i)$ such that $H(U_{n,i}^{(q)})=\log_{|\cX|}|\cY|$, 
which implies that $H(U_{1,1}^{(q)},\cdots,U_{N,\ell_{N}}^{(q)})\geq \log_{|\cX|}|\cY|$. 
Thus, we see from \eqref{min-siz-multiple-pf-condition-k} that 
\begin{equation}
H(A_{1:N}^{(q)}|W_k)\begin{cases}=0\\\geq \log_{|\cX|}|\cY|.\end{cases}   \label{min-siz-multiple-pf-2}
\end{equation}
Now, \eqref{min-siz-multiple-pf-2} contradicts with \eqref{min-siz-multiple-pf-1}. 
Then the supposition that $\frac{L}{\log_{|\cX|}|\cY|}$ is non-integer is false, and thus $L$ is a multiple of $\log_{|\cX|}|\cY|$. 

%%==================================================================
\section{Proof of \Cref{thm-message-size}} \label{section-message-size-proof}
\subsection{Total Leakage Case}
In \cite{Tian-Sun-Chen-PIR-IT19}, three properties P1-P3 were used to prove the minimum message size. 
Here, we verify that P1-P3 are also true for Pareto optimal uniformly decomposable PIR codes with total leakage constraint $s$, 
and then the minimum message size can be obtained accordingly. 
We first define some notations before presenting the properties. 
For notational simplicity, we write $W_{K+1}=S$ and then $W_{1:K+1}=(W_{1:K},S)$ and $W_{\bar{k}}=(W_1,\cdots,W_{k-1},W_{k+1},\cdots,W_K,S)$. 
For $n\in[1:N]$ and $q\in\cQ_n$, the answer for decomposable codes can be simply written in matrix form as 
\begin{equation}
A_n^{[k]}=\varphi_n(Q_n^{[k]}=q,W_{1:K},S)=W_{1:K+1}\cdot G_n^{(q)},  \label{msg-siz-pf-ans1}
\end{equation} 
where $W_{1:K+1}$ is viewed as a length-$(K+1)$ vector, 
and $G_n^{(q)}$ is a matrix of dimension $(K+1)\times l_n$ whose elements $G_{n,i,k}^{(q)}$ are functions $\cX^L(\text{or }\cS)\rightarrow\cY$ with the ``$\cdot$" operation defined as 
$W_k\cdot G_{n,i,k}^{(q)}\triangleq \varphi_{n,i,k}^{(q)}(W_k).$
Let $q_{1:N}$ be a set of queries such that $\Pr(Q_{1:N}^{[k]}=q_{1:N})>0$. 
For any $\mathcal{A}\subseteq[1:K]$, let $G_{n|\mathcal{A}}^{(q)}$ be the submatrix of $G_n^{(q)}$ with rows corresponding to $\{W_i,i\in\mathcal{A}\}$ removed. 
For Pareto optimal uniformly decomposable PIR codes, the three properties are stated as follows. 
\begin{enumerate}[{\bf P}1.]
	\item The $N$ random variables $A_1^{(q_1)},A_2^{(q_2)},\cdots,A_N^{(q_N)}$ are mutually independent, 
	where $A_n^{(q_n)}$ is the $n$-th answer when $Q_n^{[k]}=q_n$.
	
	\item The $N$ random variables $W_{\bar{k}}\cdot G_{1|k}^{(q_1)}, W_{\bar{k}}\cdot G_{2|k}^{(q_2)},\cdots,W_{\bar{k}}\cdot G_{N|k}^{(q_N)}$ are deterministic of each other. 
	
	\item The $N$ random variables $W_{k}\cdot G_{1|\bar{k}}^{(q_1)}, W_{k}\cdot G_{2|\bar{k}}^{(q_2)},\cdots,W_{k}\cdot G_{N|\bar{k}}^{(q_N)}$ are mutually independent. 
\end{enumerate}
Then the properties can be verified as follows. 
\begin{enumerate}[i)]
	\item It is easily seen from the proof of \textbf{P}1 in \cite{Tian-Sun-Chen-PIR-IT19} that 
	the equality of \eqref{converse-download} in the current paper holds if and only if the equality in Lemma 1 of \cite{Tian-Sun-Chen-PIR-IT19} holds. 
	
	\item When \eqref{converse-s-iteration-1} is equality, we obtain for $k=1$ that 
	\begin{equation}
	H(A_{1:N}^{[1]}|W_{1}A_n^{[1]}\bF)=0,
	\end{equation}
	which is exactly the condition that proves \textbf{P}2 in \cite{Tian-Sun-Chen-PIR-IT19}. 
	%	Thus, if the equality in \eqref{converse-s-iteration-0} holds, P2 is true. 
	
	\item The inequality in \eqref{converse-s-iteration-5} is exactly the same as (d) in the proof of Lemma 2 in \cite{Tian-Sun-Chen-PIR-IT19}, which induces \textbf{P}3. 
\end{enumerate}
The above observations ensure that the properties \textbf{P}1-\textbf{P}3 are also true for Pareto optimal uniformly decomposable PIR codes with total leakage constraint $s$. 
Then we can use the argument in \cite{Tian-Sun-Chen-PIR-IT19} to lower bound the message length as $L\geq (N-1)\log_{|\cX|}|\cY|$.

\subsection{Individual Leakage Case}
For the individual leakage case, the following lemma provides a new method of proving the minimum message size. 
We can also apply the same method to the total leakage case, where the analysis of total leakage may be more complicated. The details for verifying this is omitted here. 
%are also true for uniformly decomposable capacity-achieving PIR codes with either total leakage level $s$ or individual leakage level $w$. 
\begin{lemma}\label{lemma-independence-property}
	The message length of Pareto optimal uniformly decomposable PIR codes with individual leakage constraint $w$ is lower bounded by $L\geq (N-1)\log_{|\cX|}|\cY|$. 
\end{lemma}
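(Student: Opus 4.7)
The plan is to mirror the strategy used in the total-leakage subsection (which in turn adapts \cite{Tian-Sun-Chen-PIR-IT19}): first pinpoint exactly which inequalities in the converse of \Cref{section-weak-converse} must become equalities when the code is Pareto optimal and achieves the minimum amount of common randomness, then translate each tight inequality, via uniform decomposability, into a structural property of the matrices $G_n^{(q)}$ introduced in \eqref{msg-siz-pf-ans1} (the individual-leakage analogs of \textbf{P}1--\textbf{P}3), and finally close with a dimension-counting argument on a single positive-probability realization $q_{1:N}$ of $Q_{1:N}^{[k]}$.

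First I would obtain the individual-leakage analog of \textbf{P}1 (mutual independence of $A_1^{(q_1)},\ldots,A_N^{(q_N)}$). The step \eqref{converse-weak-proof-1} rests on $D\ge H(A_{1:N}^{[k]}\mid\bF)$, which by the chain \eqref{converse-download0}--\eqref{converse-download} is in turn driven by the subadditivity bound $H(A_{1:N}^{[k]}\mid\bF)\le\sum_n H(A_n^{(q_n)})$. Pareto optimality forces this to be tight for every $q_{1:N}$ in the support, so \textbf{P}1 follows exactly as in \cite{Tian-Sun-Chen-PIR-IT19}.

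Next I would extract the analogs of \textbf{P}2 and \textbf{P}3 from the iteration \eqref{converse-w-iteration-1}--\eqref{converse-w-iteration-5}. Tightness of \eqref{converse-w-iteration-1} amounts to $\sum_n H(A_n^{[k]}W_k\mid\bF) = NH(A_{1:N}^{[k]}W_k\mid\bF)$, which combined with uniform decomposability and the identical-distribution constraint \eqref{privacy-identical-distribute} (invoked in \eqref{converse-w-iteration-3}--\eqref{converse-w-iteration-4}) yields that the ``non-$k$ parts'' $W_{\bar k}\cdot G_{n|k}^{(q_n)}$ are deterministic functions of one another across $n$, which is \textbf{P}2. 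Tightness of the bound \eqref{converse-w-iteration-5}, together with the individual-leakage hypothesis $I(W_{k'};A_{1:N}^{[k]}\mid\bF)\le wL$ plugged through \eqref{converse-weak-proof-2}--\eqref{converse-weak-proof-4}, forces the message-$k$ components $W_k\cdot G_{n|\bar k}^{(q_n)}$ to be mutually independent and each to carry rate $L$, giving \textbf{P}3; note that the leakage slack $w$ enters only as an additive constant that does not change the rank conditions governing $\ell_n$.

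With \textbf{P}1--\textbf{P}3 established, I would conclude by the same rank-counting argument as in \cite{Tian-Sun-Chen-PIR-IT19}: fix $q_{1:N}$ with $\Pr(Q_{1:N}^{[k]}=q_{1:N})>0$; the independence of the $N$ random variables $W_k\cdot G_{n|\bar k}^{(q_n)}$ gives $\sum_n\mathrm{rank}\,G_{n|\bar k}^{(q_n)} \ge N L/\log_{|\cX|}|\cY|$, while the determinism of $W_{\bar k}\cdot G_{n|k}^{(q_n)}$ across $n$ allows the shared ``non-$k$'' information in all $N$ answers to be accounted for by at most $L/\log_{|\cX|}|\cY|$ symbols; subtracting the latter from the former and using $\sum_n \ell_n\ge \sum_n\mathrm{rank}\,G_n^{(q_n)}$ yields $L\ge (N-1)\log_{|\cX|}|\cY|$. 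The main obstacle will be the careful bookkeeping in the individual-leakage version of \textbf{P}2: unlike the total-leakage case, the converse iteration works with the pair $(W_k,W_{k'})$ rather than with all of $W_{1:K}$, so the ``deterministic across $n$'' conclusion must be phrased with the remaining messages $W_{[1:K]\setminus\{k,k'\}}$ absorbed as common randomness, and one has to verify—using that this holds uniformly in the choice of $k'\neq k$—that the counting step still loses only a single $\log_{|\cX|}|\cY|$-sized ``free'' term rather than $K-1$ of them.
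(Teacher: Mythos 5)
Your proposal takes a genuinely different route from the paper. The paper proves \Cref{lemma-independence-property} by a direct probabilistic/counting argument: supposing $L\leq(N-2)\log_{|\cX|}|\cY|$, it first shows (via the constraint that $D_{\cY}$ is not an integer) that a positive fraction $P^{(0)}\geq 1-\tfrac{1-w}{N-1}\tfrac{L}{\log_{|\cX|}|\cY|}$ of query realizations induce exactly $\tfrac{L}{\log_{|\cX|}|\cY|}$ downloaded symbols, observes that for such realizations the answers are in bijection with $W_k$ so that every answer symbol contributes $\log_{|\cX|}|\cY|$ of information about $W_k$, and then uses the identical-distribution privacy constraint \eqref{privacy-identical-distribute} to transfer that leakage to the retrieval of a different index $k'$, yielding $I(W_k;A_{1:N}^{[k']}Q_{1:N}^{[k']})>wL$ — a contradiction. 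You instead propose to port the structural \textbf{P}1--\textbf{P}3 verification from the total-leakage appendix. That route is defensible, and your flagged ``main obstacle'' is actually less severe than you fear: the condition that yields \textbf{P}2 is tightness of \eqref{converse-w-iteration-1}, which reads $H(A_{1:N}^{[k]}\mid A_n^{[k]},W_k,\bF)=0$ — the conditioning is already exactly $W_k$, identical to the $k=1$ instance of \eqref{converse-s-iteration-1} used in the total-leakage verification, so the remaining messages $W_{[1:K]\setminus\{k,k'\}}$ never need to be ``absorbed as common randomness.'' However, your proposal stops at the plan level: you neither carry out the tightness-to-property derivations nor confirm that tightness of the averaged inequalities forces the per-realization equalities on every positive-probability $q_{1:N}$ (the $\textbf{P}i$ are per-realization statements and the Pareto optimal code mixes a TSC-type and an uncoded-type branch with different lengths), nor do you execute the rank-counting step. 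The paper's method is arguably preferable here precisely because it avoids the \textbf{P}1--\textbf{P}3 machinery and \cite{Tian-Sun-Chen-PIR-IT19} entirely and closes with a short self-contained numerical contradiction against the individual-leakage constraint \eqref{def-weak}, at the cost of an explicit case analysis (e.g.\ it needs $K\geq 3$ to get $\tfrac{1}{N^2-2(N-1)}>\tfrac{1}{N^{K-1}}$ in \eqref{msg-siz-pf-inequality-4}).
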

\begin{proof}
	For $w>\frac{1}{N^{K-1}}$, the problem is equivalent to classical PIR. 
	Thus, we only consider $w\in[0,\frac{1}{N^{K-1}}]$, for which we prove the lemma by contradiction. 
	By \Cref{lemma-message-size-multiple}, we suppose there exist Pareto optimal uniformly decomposable codes such that 
	\begin{equation}
	L\leq (N-2)\log_{|\cX|}|\cY|.  \label{msg-siz-pf-assumption}
	\end{equation}
	Fix a $k\in[1:K]$. For each query realization $Q_{1:N}^{[k]}=q_{1:N}$, let $d$ be the corresponding total number of downloaded symbols in $\cY$, which can be written as $d=\sum_{n=1}^N\ell_n(q_n). $ 
	Because of recovery requirement of $W_k$, we have 
	\begin{equation}
	d\geq \frac{L}{\log_{|\cX|}|\cY|}.  \label{msg-siz-pf-d>=}
	\end{equation}
	For nonnegative integer $m$, let $\cQ^{(m)}$ be the set of query realizations defined by 
	\begin{equation}
	\cQ^{(m)}\triangleq \left\{q_{1:N}:d=\frac{L}{\log_{|\cX|}|\cY|}+m\right\}. 
	\end{equation}
	Define a probability $P^{(m)}$ by 
	\begin{equation}
	P^{(m)}=\sum_{q_{1:N}\in \cQ^{(m)}}\Pr(q_{1:N}). 
	\end{equation}
	From \eqref{min-siz-multiple-pf-Dy}, the average number of downloaded symbols in $\cY$ for Pareto optimal codes is 
	\begin{equation}
	D_{\cY}=\frac{N-w}{N-1}\frac{L}{\log_{|\cX|}|\cY|},  \label{msg-siz-pf-def-D_Y}
	\end{equation}
	which, by the supposition \eqref{msg-siz-pf-assumption} and the range of $w\in[0,\frac{1}{N^{K-1}}]$, satisfies 
	$0<D_{\cY}-\frac{L}{\log_{|\cX|}|\cY|}<1, $
	and thus $D_{\cY}$ is not an integer. 
	In light of the lower bound of $d$ in \eqref{msg-siz-pf-d>=}, the only possible realization of queries that induce $d<D_{\cY}$ are $q_{1:N}\in \cQ^{(0)}$, 
	and the corresponding downloaded number of symbols is $d=\frac{L}{\log_{|\cX|}|\cY|}$. 
	Then we have 
	\begin{align}
	D_{\cY}&=\sum_{m}P^{(m)}\left(\frac{L}{\log_{|\cX|}|\cY|}+m\right)  \\
	&=\left(\sum_{m}P^{(m)}\right)\frac{L}{\log_{|\cX|}|\cY|}+\sum_{m\geq 1}mP^{(m)}  \\
	&=\frac{L}{\log_{|\cX|}|\cY|}+\sum_{m\geq 1}mP^{(m)}  \\
	&\geq \frac{L}{\log_{|\cX|}|\cY|}+\sum_{m\geq 1}P^{(m)}  \\
	&=\frac{L}{\log_{|\cX|}|\cY|}+(1-P^{(0)}),
	\end{align}
	which by \eqref{msg-siz-pf-def-D_Y} implies that 
	\begin{equation}
	P^{(0)}\geq 1-\frac{1-w}{N-1}\frac{L}{\log_{|\cX|}|\cY|}.  \label{msg-siz-pf-probability0}
	\end{equation}
	%	By the definition of uniformly decomposable codes in \eqref{decompose-def-1}-\eqref{uniform-def-2}, 
	%	the answer $A_n^{[k]}$ consists of $\ell_n$ uniformly distributed random variables over $\cY$. 
	
	In the following, we will derive a lower bound for information leakage that leads to a contradiction with the individual leakage constraints. 
	For $q_{1:N}\in \cQ^{(0)}$, we have 
	\begin{equation}
	d=\ell_1+\ell_2+\cdots+\ell_{N}=\frac{L}{\log_{|\cX|}|\cY|}.   \label{msg-siz-pf-length}
	\end{equation}
	which implies $\ell_n\in\left[0:\frac{L}{\log_{|\cX|}|\cY|}\right]$ for any $n\in[1:N]$. 
	Then for any $n\in[1:N]$, define a parameter by 
	\begin{equation}
	T_n^{[k]}\triangleq \sum_{i=1}^{\frac{L}{\log_{|\cX|}|\cY|}}i\cdot \Pr\left\{q_{1:N}\in \cQ^{(0)}, \ell_n=i\right\}, 
	\end{equation}
	which can be roughly viewed as the average length of the answer when $q_{1:N}\in \cQ^{(0)}$. 
	Then it is intuitive that 
	$\sum_{n=1}^N T_n^{[k]}=\frac{L}{\log_{|\cX|}|\cY|}\cdot P^{(0)},$
	%	\begin{align}
	%	\sum_{n=1}^N T_n^{[k]}&=\sum_{n=1}^N\sum_{i=1}^{\frac{L}{\log_{|\cX|}|\cY|}}i\cdot \Pr\left\{q_{1:N}\in \cQ^{(0)}, \ell_n=i\right\}   \\
	%	&=\left(\sum_{n=1}^N\sum_{i=1}^{\frac{L}{\log_{|\cX|}|\cY|}}i\cdot \Pr\left\{\ell_n=i \big| q_{1:N}\in \cQ^{(0)}\right\}\right)\cdot\Pr\left\{q_{1:N}\in \cQ^{(0)}\right\}   \\
	%	&=\frac{L}{\log_{|\cX|}|\cY|}\cdot P^{(0)},
	%	\end{align}
	which implies that there exists an $n\in[1:N]$ such that 
	\begin{align}
	T_n^{[k]}&\geq \frac{\frac{L}{\log_{|\cX|}|\cY|}}{N}\cdot P^{(0)}.
	\end{align}
	In the following, we consider this particular $n$. 
	Then by \eqref{msg-siz-pf-probability0}, we have 
	\begin{equation}
	T_n^{[k]}\geq \frac{\frac{L}{\log_{|\cX|}|\cY|}}{N}\left(1-\frac{1-w}{N-1}\frac{L}{\log_{|\cX|}|\cY|}\right).  \label{msg-siz-pf-prob-Ank} 
	\end{equation}
	For $q_{1:N}\in \cQ^{(0)}$, since each answer symbol is either deterministic or uniformly distributed over $\cY$, we obtain from \eqref{msg-siz-pf-length} that 
	\begin{equation}
	H(A_{1:N}^{(q_{1:N})})\leq \frac{L}{\log_{|\cX|}|\cY|}\cdot\log_{|\cX|}|\cY|=L=H(W_k). 
	\end{equation}
	This together with the recovery requirement of $W_k$, i.e., $H(W_k|A_{1:N}^{(q_{1:N})})=0$, imply that 
	$H(A_{1:N}^{(q_{1:N})})=H(W_k). $
	Then for the queries $q_{1:N}\in \cQ^{(0)}$, the answers $A_{1:N}^{(q_{1:N})}$ and the message $W_k$ are deterministic of each other, 
	and all the answer symbols are non-deterministic. 
	Let $A_n^{(q_n)}=\left(A_{n,1}^{(q_n)},A_{n,2}^{(q_n)},\cdots,A_{n,\ell_n}^{(q_n)}\right)$ where each component $A_{n,j}^{(q_n)}\in\cY$. 
	Then we have 
	\begin{equation}
	I(W_k;A_{n,j}^{(q_n)})=H(A_{n,j}^{(q_n)})=\log_{|\cX|}|\cY|, \forall j\in[1:\ell_n].   \label{msg-siz-pf-leakage-unit}
	\end{equation}
	Define a set of query realizations of $Q_n^{[k]}$ by $\cQ_n^{(0)}\triangleq \left\{q_n: q_{1:N}\in \cQ^{(0)}\right\}$. 
	Since for $q_{1:N}\notin \cQ^{(0)}$, the $n$-th query can also be in $\cQ_n^{(0)}$, we have 
	\begin{equation}
	\Pr\{q_n\in\cQ_n^{(0)}\}\geq \Pr\{q_{1:N}\in \cQ^{(0)}\}.   \label{msg-siz-pf-query-probabilities}
	\end{equation}
	
	Next, we use \eqref{msg-siz-pf-prob-Ank} to derive an lower bound for individual leakage that will lead to a contradiction. 
	Consider $k'\neq k$. Similarly, we define $T_n^{[k']}$  by 
	\begin{equation}
	T_n^{[k']}\triangleq \sum_{i=1}^{\frac{L}{\log_{|\cX|}|\cY|}}i\cdot \Pr\left\{q_n\in\cQ_n^{(0)}, \ell_n=i\right\}. 
	\end{equation}
	Then the leakage of $W_k$ from the answers for $k'$ can be bounded as follows, 
	\begin{equation}
	I(W_k;A_{1:N}^{[k']}Q_{1:N}^{[k']})\geq I(W_k;A_n^{[k']}Q_n^{[k']})\geq T_n^{[k']}\cdot \log_{|\cX|}|\cY|,  \label{msg-siz-pf-leakage-bound}
	\end{equation}
	where the last inequality follows from \eqref{msg-siz-pf-leakage-unit}. 
	Because of identical distribution of $Q_n^{[k]}$ and $Q_n^{[k']}$ in \eqref{privacy-identical-distribute}, we obtain from \eqref{msg-siz-pf-query-probabilities} that  
	\begin{equation*}
	\Pr\left\{Q_n^{[k']}\!\in \! \cQ_n^{(0)}\right\}=\Pr\left\{Q_n^{[k]} \! \in \! \cQ_n^{(0)}\right\}\geq \Pr\left\{Q_{1:N}^{[k]} \!\in\! \cQ^{(0)}\right\}, 
	\end{equation*}
	which implies that 
	\begin{equation}
	T_n^{[k']}\geq T_n^{[k]}. \label{msg-siz-pf-prob-Ank'-iid}
	\end{equation}
	By \eqref{msg-siz-pf-leakage-bound}, \eqref{msg-siz-pf-prob-Ank'-iid}, and the lower bound in \eqref{msg-siz-pf-prob-Ank}, we obtain 
	\begin{equation}
	I(W_k;A_{1:N}^{[k']}Q_{1:N}^{[k']})\geq \frac{L}{N}\left(1-\frac{1-w}{N-1}\frac{L}{\log_{|\cX|}|\cY|}\right).   \label{msg-siz-pf-leakage}
	\end{equation}
	
	In the following, we derive the contradiction by arguing 
	\begin{equation}
	\frac{L}{N}\left(1-\frac{1-w}{N-1}\frac{L}{\log_{|\cX|}|\cY|}\right)>wL,  \label{msg-siz-pf-inequality-0}
	\end{equation}
	which can be verified equivalent to 
	\begin{equation}
	w<\frac{(N-1)-\frac{L}{\log_{|\cX|}|\cY|}}{(N^2-N)-\frac{L}{\log_{|\cX|}|\cY|}}.   \label{msg-siz-pf-inequality-1}
	\end{equation}
	It is easy to see that the right-hand side is a decreasing function of $L$. 
	Since by assumption $L\leq (N-2)\frac{L}{\log_{|\cX|}|\cY|}$, we have 
	\begin{align}
	\frac{(N-1)-\frac{L}{\log_{|\cX|}|\cY|}}{(N^2-N)-\frac{L}{\log_{|\cX|}|\cY|}}&\geq \frac{(N-1)-(N-2)}{(N^2-N)-(N-2)}  \nonumber \\
	&=\frac{1}{N^2-2(N-1)}. \label{msg-siz-pf-inequality-2}
	\end{align}
	For $K\geq 3$ and $N\geq 2$, we can see that $\frac{1}{N^2-2(N-1)}>\frac{1}{N^{K-1}}$.
	%	\begin{equation}
	%	\frac{1}{N^2-2(N-1)}>\frac{1}{N^{K-1}}.  \label{msg-siz-pf-inequality-3}
	%	\end{equation}
	Thus, for any $w\in[0,\frac{1}{N^{K-1}}]$, we have 
	\begin{equation}
	\frac{1}{N^2-2(N-1)}>w. \label{msg-siz-pf-inequality-4}
	\end{equation}
	From \eqref{msg-siz-pf-inequality-2} and \eqref{msg-siz-pf-inequality-4}, we obtain \eqref{msg-siz-pf-inequality-1}, and thus the argument \eqref{msg-siz-pf-inequality-0} is true. 
	Then \eqref{msg-siz-pf-leakage} and \eqref{msg-siz-pf-inequality-0} imply that 
	\begin{equation}
	I(W_k;A_{1:N}^{[k']}Q_{1:N}^{[k']})>wL,
	\end{equation}
	which contradicts the individual leakage constraint definition in \eqref{def-weak}. 
	Thus, the supposition at beginning that $L\leq (N-2)\frac{L}{\log_{|\cX|}|\cY|}$ is false, and this proves the lemma. 
\end{proof}
%
%Similarly, we can verify P1 and P2 for the individual leakage case. 
%\begin{enumerate}[i)]
%	\item Since \eqref{converse-weak-proof-1} is the same as \eqref{converse-download}, this is the same as i) for the total leakage case. 
%	
%	\item When \eqref{converse-w-iteration-0} is equality, we also obtain for $k=1$ that 
%	\begin{equation}
%	H(A_{1:N}^{[1]}|W_{1}A_n^{[1]}\bF)=0,
%	\end{equation}
%	which is exactly the condition that proves P2 in \cite{Tian-Sun-Chen-PIR-IT19}. 
%\end{enumerate}
%Now we have proved the lower bound of message size which is $L\geq (N-1)\log|\cY|$. 

%%==================================================================
\section{Proof of \Cref{lemma-strong-iteration}}\label{lemma-strong-iteration-proof}
To prove the lemma, we consider the following. 
\begin{align}
&NH(A_{1:N}^{[k]},W_{1:k}|\bF)\geq \sum_{n=1}^{N}H(A_n^{[k]},W_{1:k}|\bF)  \label{converse-s-iteration-1} \\
&=\sum_{n=1}^{N}H(A_n^{[k]},W_{1:k}|Q_n^{[k]}) \label{converse-s-iteration-2} \\
&=\sum_{n=1}^{N}H(A_n^{[k+1]},W_{1:k}|Q_n^{[k+1]})  \label{converse-s-iteration-3} \\
&=\sum_{n=1}^{N}H(A_n^{[k+1]},W_{1:k}|\bF)  \label{converse-s-iteration-4} \\
&=\sum_{n=1}^{N}H(A_n^{[k+1]}|W_{1:k},\bF)+NH(W_{1:k})   \\
&\geq \sum_{n=1}^{N}H(A_n^{[k+1]}|W_{1:k},\bF A_{1:n-1}^{[k+1]})+NH(W_{1:k})  \label{converse-s-iteration-5} \\
&=H(A_{1:N}^{[k+1]}|W_{1:k},\bF)+NH(W_{1:k})   \\
&= H(A_{1:N}^{[k+1]},W_{1:k}|\bF)+(N-1)H(W_{1:k})  \\
&= H(A_{1:N}^{[k+1]},Q_{1:N}^{[k+1]},W_{1:k+1}|\bF)+(N-1)H(W_{1:k})  \\
&= H(A_{1:N}^{[k+1]},W_{1:k+1}|\bF)+k(N-1)L, 
\end{align}
where \eqref{converse-s-iteration-2} and \eqref{converse-s-iteration-4} follow from the Markov chain $(A_n^{[k]},W_{[1:K]})\rightarrow Q_n^{[k]}\rightarrow\bF$, 
and \eqref{converse-s-iteration-3} follows from identical distribution constraint in \eqref{privacy-identical-distribute}. 
The lemma is proved.

\bibliographystyle{ieeetr}

\end{document}